\newcommand{\matr}[1]{\bm{#1}}
\newcommand{\set}[1]{\mathcal{#1}}
\newcommand{\vect}[1]{\bm{#1}}
\theoremstyle{definition}
\newtheorem{theorem}{Theorem}
\newtheorem{defn}{Definition}[section]
\newtheorem{remark}{Remark}
\newtheorem{lemma}[]{Lemma}
\def\BibTeX{{\rm B\kern-.05em{\sc i\kern-.025em b}\kern-.08em
    T\kern-.1667em\lower.7ex\hbox{E}\kern-.125emX}}
\begin{document}
\title{On the Impact of Sample Size in Reconstructing Graph Signals}
\author{\IEEEauthorblockN{Baskaran Sripathmanathan, Xiaowen Dong, Michael Bronstein}
    \IEEEauthorblockA{University of Oxford}
}


\maketitle

\begin{abstract}
Reconstructing a signal on a graph from observations on a subset of the vertices is a fundamental problem in the field of graph signal processing. It is often assumed that adding additional observations to an observation set will reduce the expected reconstruction error. We show that under the setting of noisy observation and least-squares reconstruction this is not always the case, characterising the behaviour both theoretically and experimentally. 
\end{abstract}

\begin{IEEEkeywords}
Graph signal processing, sampling, reconstruction, least squares, robustness.
\end{IEEEkeywords}

\section{Introduction}

\IEEEPARstart{G}{raph} signal processing (GSP) has gained popularity owing to its ability to process and analyze signals on graphs, such as political preferences \cite{renoust2017estimating}, brain fMRIs \cite{itani2021graph} and urban air pollution \cite{jain2014big}. GSP generalises the highly successful tools of classical signal processing from regular domains such as grids to graphs.
Similar to the classical case, the computational costs of processing and storing large volumes of graph signals can be prohibitive, and complete data may not be available owing to impractically high observation costs. Graph sampling provides a solution to these problems by efficiently extrapolating the full data across the graph from observations on a set of vertices or summaries of the data \cite{tanaka2020sampling}.

Sampling in the graph setting poses more challenges than classical sampling because of the irregularity of the graph domain. One such challenge is that periodic sampling, widely used in traditional signal processing, is not applicable. Instead, sample selection must adapt to the graph's topology. Optimal sample selection on graphs is in general NP-hard \cite{chamon2017greedy, nikolov2022proportional}. Many works focus on providing efficient heuristics for selecting good sample sets under different optimality criteria.
These studies also provide bounds to help practitioners manage the trade-off between observation cost and reconstruction loss while determining sample size \cite{wang2018optimal,wang2019low, mfn, jayawant2021doptimal, tremblay2017determinantal, bai2020fast}. One limitation of these bounds is the scope of their settings: some bounds are set in the noiseless setting \cite{shomorony2014sampling}, while most recent sample-set selection literature is set in the noisy observation setting. In the noisy observation setting, sample-size bounds can require optimal Bayesian reconstruction \cite{chamon2017greedy}, as opposed to being generic over the various reconstruction methods presented in and benchmarked against in the sample set selection literature, e.g., least squares (LS) \cite{jayawant2021doptimal, tremblay2017determinantal}, variants of LS  \cite{wang2018optimal, wang2019low} or graph-Laplacian regularised (GLR) reconstruction \cite{bai2020fast}. Furthermore, driven by these bounds, many papers in the sample set literature only present experiments with sample sizes exceeding the bandwidth.

This paper presents two primary contributions. First, we demonstrate that the commonly held expectation that increasing sample size results in lower MSE (presented, for example, below equation (13) in \cite{chamon2017greedy}) does not hold under LS in many of the settings studied in the literature for signals with noise. Second, we show that it is possible to simultaneously reduce observation cost and reconstruction error compared to sampling and reconstruction schemes presented in the literature. We support our findings with theoretical evidence and experiments conducted under LS. 

\section{Related Work}
\label{app:related_work}
Graph signal processing extends the fundamental problem of sampling and reconstruction from signals in the Euclidean domain to graph-structured data. It does so by generalising the graph shift operator \cite{ortega2018graph} - the most common choices being the adjacency matrix of the graph, the graph Laplacian, or a normalised variant of those - and using it to define a signal model. While some work uses the adjacency matrix as the shift operator \cite{EOptimalChen}, and the theorems in \cite{chamon2017greedy} apply to all of these operators, most of the literature uses a normalised variant of the graph Laplacian. See \cite{ortega2018graph} for a more complete consideration of the trade-offs involved in this choice.

The graph sampling literature is further divided by considerations around the signal model, the reconstruction method and the optimality objective, which we describe below.

\subsection{Bandlimited Signals}
The most common signal model used in the literature is the bandlimited signal model. For a graph $\mathcal{G}$ with a shift operator $\matr{L}$ with eigenvalues $\lambda_1 \leq .. \leq \lambda_N$, the space spanned by the first $k$ eigenvectors of $\matr{L}$ is called a Paley-Wiener space $PW_{\omega}(\mathcal{G})$ (for any $\omega \in  [\lambda_{k}, \lambda_{k+1})$) and its elements are called $k$-bandlimited signals. Pesenson \cite{pesenson2008sampling} introduced the concept of a \emph{uniqueness set} which is a vertex set capable of perfectly reconstructing any signal in $PW_{\omega}(\mathcal{G})$, and notes that it must include at least $k$ vertices. This provides a unique optimality criterion for sample sets, for which multiple sampling schemes have been devised \cite{anis2016efficient,puy2018random}.

\subsection{Non-bandlimited signals}
It is rare for observed signals to be perfectly bandlimited. While some work has focused on extending the class of underlying signals to `approximately bandlimited signals' \cite{chen2016signal, lin2019active}, it is mostly assumed that there is a clean underlying bandlimited signal and our observations are corrupted by additive noise. The extra error introduced by this noise is handled in two ways: noise-aware sampling criteria (and corresponding sampling schemes), and robust reconstruction algorithms.

While there is a unique optimality criterion in the noiseless case, there are multiple in the additive noise case:

\begin{itemize}
    \item \emph{MMSE criterion}: Minimise the average mean squared error (MSE), called \emph{A-Optimality} under LS \cite{wang2018optimal,wang2019low, mfn}
    \item \emph{Confidence Ellipsoid criterion}: Minimise the confidence ellipsoid around the eigenbasis co-efficients, which is called \emph{D-optimality} under LS \cite{jayawant2021doptimal, tremblay2017determinantal}
    \item \emph{WMSE criterion}: Minimise the worst-case MSE, which is called \emph{E-optimality} under LS \cite{bai2020fast, EOptimalChen}
\end{itemize}

These criteria under LS reconstruction, and equivalences to other optimality criteria, are further studied in the literature on Optimal Design of Experiments (see \cite{wang2018optimal} for more detail).

\subsection{Reconstruction Methods}
Reconstruction methods are also known as interpolation operators \cite{chamon2017greedy}. The most common methods of reconstructing noisy signals are LS reconstruction \cite{jayawant2021doptimal, tremblay2017determinantal}, variants of LS reconstruction \cite{wang2018optimal, wang2019low} and GLR reconstruction \cite{bai2020fast}. The variants of LS reconstruction and GLR reconstruction are more robust than ordinary LS reconstruction. We provide more detail on these schemes in Section \ref{sec:rec-methods}.

\section{Preliminaries}
\subsection{Graph Signals and bandlimitedness}
{\color{black}We define a graph $\mathcal{G}$ to consist of a set of $N$ vertices and a set of edges with associated edge weights. We assume the graph is connected and undirected.}
We consider a \emph{bandlimited} signal $\vect{x}$ on $\mathcal{G}$, generalising the classical signal processing definition of bandlimitedness. We do so by considering a symmetric, positive semi-definite shift operator $\matr{L}$ on $\mathcal{G}$; commonly used examples of $\matr{L}$ include the combinatorial Laplacian and its normalised variants. We take its eigendecomposition
\begin{equation}
    \matr{L} = \matr{U}\matr{\Lambda} \matr{U}^{T} \nonumber
\end{equation}
writing $\matr{\Lambda} = diag(\lambda_1, \dots, \lambda_N)$ where $0 = \lambda_{1} \leq \dots \leq \lambda_{N}$ are the eigenvalues of $\matr{L}$, also known as the \emph{graph frequencies}\cite{ortega2018graph}. $\matr{U}$ forms an orthonormal basis of $\mathbb{R}^{N}$; let $\matr{U}_{k}$ be the $k$ columns of $\matr{U}$ corresponding to the $k$ lowest graph frequencies. We say that $\vect{x}$ is \emph{$k$-bandlimited} if $\vect{x} \in span(\matr{U}_{k})$.

\subsection{Sampling}
Inherent to the definition of bandlimitedness is that $\vect{x}$ comes from a low-dimensional subspace. This implies that we do not need to observe $\vect{x}$ on all $N$ vertices. Indeed, there is some subset of vertices such that if we observe \emph{any} $k$-bandlimited $\vect{x}$ on that subset we can reconstruct $\vect{x}$ fully and without error. Such a subset is called a \emph{uniqueness set} \cite{pesenson2008sampling}.

Given a vertex sample set $\set{S}$, let $\matr{M}_{\set{S}} \in \mathbb{R}^{|\set{S}| \times N}$ be the corresponding sampling matrix where
\begin{equation}
    \left(\matr{M}_{\set{S}}\right)_{ij} =
        \begin{cases}
      1 & \text{if}\ \set{S}_{i} = j \\
      0, & \text{otherwise}
    \end{cases}
\end{equation}
then $\set{S}$ is a uniqueness set for a bandwidth $k$ if and only if $\text{rank}(\matr{M}_\set{S}\matr{U}_{k}) = k$ \cite{anis2016efficient}.

In practice, the signal we are given is often not perfectly bandlimited. We model this as observation noise; we observe a corrupted signal $\vect{y} = \vect{x} + \sigma \cdot \vect{\epsilon}$ where
\begin{itemize}
    \item $\vect{x} \sim \mathcal{N}(\vect{0}, \matr{U}_{k}\matr{U}_{k}^{T})$ is a $k$-bandlimited Gaussian signal,
    \item $\vect{\epsilon} \sim \mathcal{N}(\vect{0},\matr{I}_{N})$ is i.i.d. Gaussian noise on each vertex,
    \item $\sigma > 0$ is some scaling of the noise
\end{itemize}
so the corrupted signal $\vect{y}$ has high-frequency components. Let the Signal-to-Noise ratio SNR = $\frac{\text{tr}(\text{cov}(\vect{x}))}{\text{tr}(\text{cov}(
\sigma \cdot \vect{\epsilon}))}$ be the ratio of the variance of the signal to the variance of the noise. Then as a ratio of variances, SNR is positive\footnote{It is common in the literature to express the SNR in decibels, which may be negative, while its ratio form remains positive. We will only use the ratio form, so for example $-20dB$ would be written as $10^{-20/10} = 10^{-2} > 0$.} and $\sigma^{2} = \frac{k}{N \cdot \text{SNR}}$.

There are multiple optimality criteria in the literature for the noisy setting; under LS they have the following forms:
\begin{align}
    \text{A-Optimality:} &\text{ minimise } \text{tr}(((\matr{M}_{\set{S}}\matr{U}_{k})(\matr{M}_{\set{S}}\matr{U}_{k})^{T})^{-1}) \label{eq:A-optimality}\\
    \text{D-Optimality:} &\text{ maximise } \text{det}((\matr{M}_{\set{S}}\matr{U}_{k})(\matr{M}_{\set{S}}\matr{U}_{k})^{T})\label{eq:D-optimality}\\
    \text{E-Optimality:} &\text{ maximise }
        \lambda_{min}((\matr{M}_{\set{S}}\matr{U}_{k})(\matr{M}_{\set{S}}\matr{U}_{k})^{T}) \label{eq:E-optimality}
\end{align}
\noindent where $\lambda_{min}(\matr{A})$ is the minimum eigenvalue of $\matr{A}$.

In this paper, we use average MSE under our model as our loss, which corresponds to A-optimality for LS. 

\subsection{Reconstruction Methods}
\label{sec:rec-methods}
There exist two common reconstruction methods in the literature: LS reconstruction (a.k.a. the standard decoder \cite{puy2018random}) and GLR reconstruction (as described in \cite{puy2018random,bai2020fast}). We summarise the differences in Table \ref{tab:tbl1}. Our analysis of LS also applies to the commonly used iterative method, Projection onto Convex Sets \cite{narang2013localized}, 
as POCS converges to LS.

\begin{table}[h]
\caption{Reconstruction Methods}
\begin{center}
    \begin{tabularx}{(\textwidth - 12pt)/2}{| X | c |X|X|X|}
    \hline
     & Objective & Param & Bias & Needs $\matr{U}_{k}$ \\
    \hline
        LS  & $\displaystyle \min_{\vect{x} \in span(\matr{U}_{k})} || \matr{M}\vect{x} - \vect{y} ||_{2}$ & band-width $k$ & no & yes \\
        \hline
        GLR & $\displaystyle \min_{\vect{x} \in \mathbb{R}^{N}} \left( || \matr{M}\vect{x} - \vect{y} ||_{2} + \mu \vect{x}^{T}\matr{L}\vect{x} \right)$ & $\mu$ & yes & no \\
        \hline
    \end{tabularx}
\end{center}
\label{tab:tbl1}
\end{table}

It is well known that, for linear models with noise, LS reconstruction is the minimum-variance unbiased estimator of $\vect{x}$ \cite{gauss1823theoria}. This justifies us focusing our analysis of unbiased linear reconstruction methods on LS, at least theoretically. In practice, computing $\matr{U}_{k}$ is slow, so GLR reconstruction is used for large graphs instead \cite{puy2018random, bai2020fast}.

We define a \emph{reconstruction method} to take observations on a vertex sample set $\set{S}$ and reconstruct the signal across all vertices. We say that a reconstruction method is \emph{linear} if it is linear in its observations. 
For a fixed vertex sample set $\set{S}$ we can represent a linear reconstruction method by a matrix $\matr{R}_{\set{S}} \in \mathbb{R}^{N \times |\set{S}|}$.

\begin{remark}
    LS and GLR reconstruction are both linear:
    \begin{align}
    \textrm{LS:\quad} \matr{R}_{\set{S}} &= \matr{U}_{k}(\matr{M}_{\set{S}}\matr{U}_{k})^{\dagger} \\
    \textrm{GLR:\quad} \matr{R}_{\set{S}} &= (\matr{M}_{\set{S}}^{T}\matr{M}_{\set{S}} + \mu \matr{L})^{-1}\matr{M}_{\set{S}}^T
    \end{align}
\end{remark}
\noindent where for a matrix $\matr{A}$, $\matr{A}^{\dagger}$ is its Moore-Penrose pseudoinverse.

{\color{black}\section{Problem Setting}
For our theoretical results and experiments, we assume:
\begin{itemize}
    \item A clean underlying $k$-bandlimited signal $\vect{x}$.
    \item The bandwidth $k$ is known.
    \item Observations of the signal are corrupted by flat-spectrum noise, meaning we observe a non-bandlimited signal.
    \item We focus on LS reconstruction.
\end{itemize}

Note that when the sample size is below the bandwidth, there are often multiple possible reconstructions. For example, when trying to minimise the LS criterion \[\min_{z \in span(\matr{U}_{k})}\left|\left| \matr{M}_{\set{S}}\vect{z} - \vect{y}\right|\right|\]
the following is a solution for any $\vect{\delta} \in \mathbb{R}^{N}$:
\[ \matr{U}_{k}\left( \left( \matr{M}_{\set{S}}\matr{U}_{k}\right)^{\dagger}\vect{z} + (\matr{I} -  \left( \matr{M}_{\set{S}}\matr{U}_{k}\right)^{\dagger} \left( \matr{M}_{\set{S}}\matr{U}_{k}\right) \vect{\delta}\right).\]

As we are mainly concerned in studying how sample size affects reconstruction error rather than recommending a specific reconstruction algorithm, for simplicity we pick the minimal 2-norm solution with $\vect{\delta} = 0$. This uniquely defines LS reconstruction even when $|\set{S}| < k$ \cite[Sect. 5.5.1]{golub13}.}

\section{Main Results}
\label{main_results_sec}
Consider reconstructing a signal with LS reconstruction. We observe the corrupted signal $\vect{y}$ at $\set{S}$
 and reconstruct $\vect{x}$ ($\hat{\vect{x}} = \matr{R}_{\set{S}}\matr{M}_{\set{S}}\vect{y}$). We decompose the expected MSE from observing $\vect{y}$ at $\set{S}$:
\begin{align}
    \mathbb{E}[\text{MSE}_{\set{S}}]
    &= \mathbb{E}\left[ || \vect{x} - \matr{R}_{\set{S}} \matr{M}_{\set{S}} \vect{y} ||^{2}_{2} \right] \nonumber \\
    &= \text{tr}(\text{cov}(\vect{x} - \matr{R}_{\set{S}} \matr{M}_{\set{S}} \vect{y})) \nonumber \\
        &=  \text{tr}(\text{cov}(\vect{x} - \matr{R}_{\set{S}} \matr{M}_{\set{S}} (\vect{x} + \sigma \cdot \vect{\epsilon})))  \label{eq:EMSE_midpt}
\end{align}
    As the underlying signal $\vect{x}$ and the noise $\vect{\epsilon}$ are independent, and as our sampling and reconstruction operators are linear:
    \begin{equation}
    \begin{split}
        \text{cov}(\vect{x} - \matr{R}_{\set{S}} \matr{M}_{\set{S}} (\vect{x} + \sigma \cdot \vect{\epsilon}))  =  \text{cov}((\vect{x} - \matr{R}_{\set{S}} \matr{M}_{\set{S}} \vect{x})) \\ + \text{cov}( \sigma \cdot (\matr{R}_{\set{S}} \matr{M}_{\set{S}} \vect{\epsilon})).
        \label{eq:split_noise_and_signal}
    \end{split}
    \end{equation}
    Let $\matr{E} = \matr{I} - \matr{R}_{\set{S}} \matr{M}_{\set{S}}$ and combine (\ref{eq:EMSE_midpt}) and (\ref{eq:split_noise_and_signal}): 
\begin{alignat}{3}
    {\color{black}\mathbb{E}[\text{MSE}_{\set{S}}]}
    &= & {\color{black}\text{tr}( \text{cov}( \matr{E} \vect{x} ) )}  
        &+ {\color{black}\sigma^{2} \cdot  \text{tr}( \text{cov}( \matr{R}_{\set{S}} \matr{M}_{\set{S}} \vect{\epsilon} ) )}  \nonumber \\
    &= &\text{tr}( (\matr{E}\matr{U}_{k})(\matr{E}\matr{U}_{k})^{T})
        &+ \sigma^{2} \cdot \text{tr}(\matr{R}_{\set{S}} \matr{M}_{\set{S}} \matr{M}_{\set{S}}^{T} \matr{R}_{\set{S}}^{T}) \nonumber \\
    &= &||\matr{E}\matr{U}_{k}||^{2}_{F}
        &+ \sigma^{2} \cdot \text{tr}(\matr{R}_{\set{S}} \matr{R}_{\set{S}}^{T}) \nonumber \\
    &= &|| \matr{U}_{k} - \matr{R}_{\set{S}}\matr{M}_{\set{S}}\matr{U}_{k} ||^{2}_{F} 
        &+ \sigma^2 \cdot || \matr{R}_{\set{S}} ||^{2}_{F}.
\end{alignat}

\noindent We define
    \begin{align}
        \xi_{1}(\set{S}) &= || \matr{U}_{k} - \matr{R}_{\set{S}}\matr{M}_{\set{S}}\matr{U}_{k} ||^{2}_{F} \\
        \xi_{2}(\set{S}) &= || \matr{R}_{\set{S}} ||^{2}_{F}
    \end{align}
so that 
\begin{equation}
    \mathbb{E}[\text{MSE}_{\set{S}}] = \xi_{1}(\set{S}) + \sigma^{2} \cdot \xi_{2}(\set{S}). \label{eq:xi_decomp}
\end{equation}
\begin{remark}
    
Setting $\sigma=0$, we see that $\xi_{1}(\set{S})$ can be interpreted as the reconstruction error in the absence of observation noise.
\end{remark}

We use this decomposition to analyse changing the vertex sample set $\set{S}$. We consider removing a vertex $v$ from $\set{S}$ to make $\set{S} \backslash \{v\}$. 

\begin{defn}
Removing $v$ \emph{improves} $\set{S}$ if $$\mathbb{E}[\text{MSE}_{\set{S}}] > \mathbb{E}[\text{MSE}_{\set{S} \backslash \{v\}}].$$
\end{defn}

For $i \in \{1,2\}$, let
\begin{equation}
    \Delta_i(\set{S}, v) = \xi_{i}(\set{S}) - \xi_{i}(\set{S} \backslash \{ v\}). 
\end{equation}
Then by (\ref{eq:xi_decomp}), the change in MSE from removing $v$ is 
\begin{equation}
    \mathbb{E}[\text{MSE}_{\set{S}}] - \mathbb{E}[\text{MSE}_{\set{S} \backslash \{v\}}]
    = \Delta_{1}(\set{S},v) + \sigma^{2} \cdot \Delta_{2}(\set{S},v). \label{eq:EMSE_decomp_into_delta}
\end{equation}
If $\Delta_{1}(\set{S},v) + \sigma^{2} \cdot \Delta_{2}(\set{S},v) > 0$, removing $v$ improves $\set{S}$.

We note that $\Delta_{1}$ is the change in MSE when there is no noise ($\sigma = 0$), so can be interpreted as learning more about $\vect{x}$. It is always non-positive under LS reconstruction (see Appendix \ref{app:delta_1_non_positive}). On the other hand, $\Delta_{2}$ is a noise-sensitivity term --- its effect scales with $
\sigma^{2}$ -- and in many cases is positive.
Under LS reconstruction, one can show that $\Delta_1$ and $\Delta_2$ are always of different signs (see Appendix \ref{app:Delta_LS_opposite_signs}).

If the effect of increasing noise sensitivity exceeds the effect of learning more about the underlying signal, then we can decrease average MSE by \emph{removing} a vertex from the observation set. This leads to our main result under LS reconstruction, which is summarised in the following theorem:

\begin{theorem}
\label{main_ls}
    Let 
    \begin{equation}
        \tau(\set{S},v) = \frac{k}{N} \cdot \Delta_{2}(\set{S},v)
    \end{equation}
    then removing $v$ improves $\set{S}$ if and only if
 \begin{equation}
 \label{eq:LS_corol_iff}
     \textrm{SNR} < \tau(\set{S},v).
 \end{equation}
 \end{theorem}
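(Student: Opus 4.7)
My plan is to start from the identity (\ref{eq:EMSE_decomp_into_delta}) already derived, which says that removing $v$ improves $\set{S}$ if and only if $\Delta_1(\set{S},v) + \sigma^{2}\Delta_2(\set{S},v) > 0$, and reduce the theorem to pure algebra combined with the structural facts recorded in Appendices \ref{app:delta_1_non_positive} and \ref{app:Delta_LS_opposite_signs}.

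The first step is to substitute the noise-SNR relation $\sigma^2 = \frac{k}{N \cdot \textrm{SNR}}$ into the condition and multiply through by the positive quantity SNR. This gives the equivalent inequality
\[
\tau(\set{S},v) = \frac{k}{N}\Delta_2(\set{S},v) > -\textrm{SNR}\cdot \Delta_1(\set{S},v).
\]
The target conclusion SNR $< \tau$ is exactly this with the coefficient $-\Delta_1$ on the right replaced by $+1$, so the remaining work is to show that only $\Delta_1 \in \{-1, 0\}$ can occur and that both values produce the same equivalence.

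To pin down the range of $\Delta_1$ under LS, I would compute $\xi_1$ directly: using $\matr{R}_\set{S} = \matr{U}_k(\matr{M}_\set{S}\matr{U}_k)^\dagger$ and the identity that $\matr{A}^\dagger\matr{A}$ is the orthogonal projector onto the row space of $\matr{A}$, one obtains $\xi_1(\set{S}) = k - \textrm{rank}(\matr{M}_\set{S}\matr{U}_k)$. Since deleting a single row changes rank by at most one, $\Delta_1 \in \{-1, 0\}$. I then split into two cases. If $\Delta_1 = 0$, Appendix \ref{app:Delta_LS_opposite_signs} yields $\Delta_2 \leq 0$, so $\Delta_1 + \sigma^2\Delta_2 \leq 0$ and $\tau \leq 0 < \textrm{SNR}$, making both sides of the iff false and hence trivially equivalent. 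If $\Delta_1 = -1$, the opposite-signs result gives $\Delta_2 > 0$, and the rearranged inequality reads exactly $\tau > \textrm{SNR}$.

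The main technical obstacle I anticipate is not in the theorem itself but in securing the strict form of the appendix ingredient: specifically, that $\Delta_1 = -1$ forces $\Delta_2$ to be strictly positive rather than merely nonnegative. I would establish this via Cauchy-type singular-value interlacing under row deletion: a strict rank drop means a previously nonzero singular value is lost, which strictly increases $\|(\matr{M}_\set{S}\matr{U}_k)^\dagger\|_F^2$ and hence makes $\Delta_2 > 0$. With that in hand, the theorem itself is the short algebraic rearrangement and two-case check outlined above.
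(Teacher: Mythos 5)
Your proposal is correct and follows essentially the same route as the paper's own proof: the same rearrangement of the $\Delta_1 + \sigma^2\Delta_2 > 0$ condition, the same rank formula giving $\Delta_1 \in \{0,-1\}$, the same two-case split, and the same reliance on the strict opposite-signs lemma (which the paper also proves via Cauchy interlacing, exactly as you anticipate). No gaps.
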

\begin{proof}
    See Appendix \ref{proof_appendix_LS}.
\end{proof}

This result says that if SNR is too low (below a threshold $\tau$ that depends on the bandwidth and the chosen samples), then we can remove a sample from our observation set to improve the average reconstruction error.

\begin{remark}
    If $\Delta_{2}$ is non-positive, we have $\tau(\set{S},v) \leq 0 < $ SNR. In this case (\ref{eq:LS_corol_iff}) cannot hold and removing $v$ will not improve $\set{S}$ for any SNR.
\end{remark}

Theorem \ref{main_ls} leaves room for a clever way to pick vertices such that the conditions on SNR in (\ref{eq:LS_corol_iff}) would never be met, hence removing a vertex would never improve the sample set. We show that no such way exists.

\begin{theorem}
    \label{thm:main_existence_LS}
    Consider a fixed vertex ordering $v_{1},\dots, v_{N}$ and let $\set{S}_{i}$ be the set of the first $i$ vertices. Then there are {\color{black}exactly} $k$ indices $1 \leq I_{1}, \dots, I_{k} \leq N$ such that
    \begin{equation}
        \forall 1\leq j \leq k: \tau(\set{S}_{I_{j}}, v_{I_{j}}) > 0,
    \end{equation}
    so removing $v_{I_{j}}$ improves $\set{S}_{I_{j}}$ at some SNR.
\end{theorem}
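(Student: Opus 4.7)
My plan is to reduce the theorem to counting the number of indices at which the rank of $\matr{M}_{\set{S}_{i}}\matr{U}_{k}$ strictly increases. Note that $\set{S}_{i}\setminus\{v_{i}\}=\set{S}_{i-1}$, so $\Delta_{j}(\set{S}_{i},v_{i})=\xi_{j}(\set{S}_{i})-\xi_{j}(\set{S}_{i-1})$ for $j\in\{1,2\}$. Since $\tau(\set{S},v)=\tfrac{k}{N}\Delta_{2}(\set{S},v)$, Appendices~\ref{app:delta_1_non_positive} and \ref{app:Delta_LS_opposite_signs} together give the following dichotomy under LS: when $\Delta_{1}<0$ we have $\Delta_{2}>0$ and hence $\tau>0$, while when $\Delta_{1}=0$ we have $\Delta_{2}\le 0$ and hence $\tau\le 0$. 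So counting indices with $\tau(\set{S}_{i},v_{i})>0$ reduces to counting indices with strictly negative $\Delta_{1}(\set{S}_{i},v_{i})$.

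The key identity I would establish is
\[
\xi_{1}(\set{S})\;=\;k-\text{rank}(\matr{M}_{\set{S}}\matr{U}_{k}).
\]
Let $\matr{A}=\matr{M}_{\set{S}}\matr{U}_{k}$ and substitute the LS reconstruction $\matr{R}_{\set{S}}=\matr{U}_{k}\matr{A}^{\dagger}$, giving $\xi_{1}(\set{S})=\|\matr{U}_{k}(\matr{I}_{k}-\matr{A}^{\dagger}\matr{A})\|_{F}^{2}$. Using $\matr{U}_{k}^{T}\matr{U}_{k}=\matr{I}_{k}$ and the fact that $\matr{A}^{\dagger}\matr{A}$ is the orthogonal projector onto the row space of $\matr{A}$ (hence idempotent with trace equal to $\text{rank}(\matr{A})$), this simplifies to $\text{tr}(\matr{I}_{k}-\matr{A}^{\dagger}\matr{A})=k-\text{rank}(\matr{A})$. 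Consequently $\Delta_{1}(\set{S}_{i},v_{i})<0$ if and only if $\text{rank}(\matr{M}_{\set{S}_{i}}\matr{U}_{k})>\text{rank}(\matr{M}_{\set{S}_{i-1}}\matr{U}_{k})$.

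To finish, set $r_{i}=\text{rank}(\matr{M}_{\set{S}_{i}}\matr{U}_{k})$. Then $r_{0}=0$, $r_{N}=k$ (since $\matr{M}_{\set{S}_{N}}$ is a permutation of $\matr{I}_{N}$ and $\matr{U}_{k}$ has $k$ orthonormal columns), and each step satisfies $r_{i}-r_{i-1}\in\{0,1\}$ because adding a single row can increase rank by at most one. The sequence therefore jumps by $+1$ at exactly $k$ indices $I_{1}<\cdots<I_{k}$, and by the dichotomy above these are precisely the indices at which $\tau(\set{S}_{I_{j}},v_{I_{j}})>0$.

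The main obstacle I anticipate is deriving $\xi_{1}(\set{S})=k-\text{rank}(\matr{M}_{\set{S}}\matr{U}_{k})$ uniformly in $|\set{S}|$: in the under-sampled regime $|\set{S}|<k$ the matrix $\matr{A}$ lacks full column rank, and the minimal-norm convention adopted in the Problem Setting must be invoked to ensure $\matr{R}_{\set{S}}$, and hence the projector $\matr{A}^{\dagger}\matr{A}$, are well defined so that the trace computation goes through. A secondary subtlety is confirming the $\Delta_{1}=0\Rightarrow\Delta_{2}\le 0$ half of the dichotomy (it prevents spurious extra indices from inflating the count above $k$); this is folded into Appendix~\ref{app:Delta_LS_opposite_signs} and can independently be seen from the fact that $\matr{A}'^{T}\matr{A}'\succeq \matr{A}^{T}\matr{A}$ when a row is appended, so no nonzero singular value of $\matr{A}$ decreases.
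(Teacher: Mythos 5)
Your proposal is correct and follows essentially the same route as the paper: the identity $\xi_{1}(\set{S})=k-\text{rank}(\matr{M}_{\set{S}}\matr{U}_{k})$ is the paper's Lemma \ref{lemma:LS_xi_1_is_rank} (proved the same way via the projector $\matr{A}^{\dagger}\matr{A}$), the dichotomy between the signs of $\Delta_{1}$ and $\Delta_{2}$ is Lemmas \ref{lemma:delta_1_ls_is_discrete} and \ref{lemma:LS_delta_1_is_0_improves_stability}, and the final count of exactly $k$ rank jumps from $r_{0}=0$ to $r_{N}=k$ is the paper's telescoping argument on $\xi_{1}$ in Appendix \ref{app:LS_satisfied}. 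Your framing in terms of the rank sequence $r_{i}$ is just a relabelling of the paper's count of indices with $\Delta_{1}=-1$, so no substantive difference.
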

\begin{proof}
    See Appendix \ref{app:LS_satisfied}.
\end{proof}
    Theorem \ref{thm:main_existence_LS} suggests that any sampling scheme, interpreted as a sequential way of picking additional samples, must encounter exactly $k$ instances where the additional vertex $v$ picked on top of the current sample set $\set{S}$ has $\tau(\set{S}\cup v, v) > 0$, meaning at a high enough noise level it increases MSE on average. Schemes in the literature which are optimal in the noiseless case, such as A-, D- and E-optimal sampling schemes, see this happen for the first $k$ vertices they pick.
\begin{theorem}
\label{thm:noiseless_optimality_means_noise_sensitivity}
    Suppose we have a greedy scheme which is optimal in the noiseless case: given the bandwidth $k$, the first $k$ vertices it samples allow for perfect reconstruction of any clean $k$-bandlimited signal. Use this scheme to select a vertex sample set $\set{S}_{m}$ with $|\set{S}_{m}| = m \leq k$. Then
    \begin{equation}
    \label{eq:greedy_sampling_first_k}
        {\color{black} \forall m \leq k: \enskip}  \forall v \in \set{S}_{m}: \enskip \tau(\set{S}_{m},v) > 0,
    \end{equation}
    that is, for \emph{any} vertex in $\set{S}$, there exists some SNR such that removing that vertex would improve $\set{S}$. {\color{black} Removal of any vertices which the scheme adds after this cannot improve the set: }
    \begin{equation}
        \label{eq:greedy_sampling_over_k}
        {\color{black} \forall m' > k: \enskip \forall v \in \set{S}_{m'} \backslash \set{S}_{k}: \enskip \tau(\set{S}_{m'},v) \leq 0.}
    \end{equation}
\end{theorem}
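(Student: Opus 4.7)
The plan is to compute $\xi_{2}$ explicitly under LS in the two regimes separated by the bandwidth and show that the sign of $\Delta_{2}$ flips there. Since $\matr{U}_{k}$ has orthonormal columns, the LS formula $\matr{R}_{\set{S}} = \matr{U}_{k}(\matr{M}_{\set{S}}\matr{U}_{k})^{\dagger}$ reduces $\xi_{2}(\set{S})$ to $||(\matr{M}_{\set{S}}\matr{U}_{k})^{\dagger}||_{F}^{2}$. A preliminary observation is that if the scheme attains rank $k$ at step $k$ by adding one row at a time, then for every intermediate $m \leq k$ the matrix $\matr{M}_{\set{S}_{m}}\matr{U}_{k}$ must have full row rank $m$ (any rank deficiency would persist and block rank $k$), and removing any $v \in \set{S}_m$ leaves a matrix of full row rank $m-1$ for the same reason.

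For (\ref{eq:greedy_sampling_first_k}), fix $m \leq k$ and $v \in \set{S}_{m}$. Because $\matr{M}_{\set{S}_m}\matr{U}_{k}$ has full row rank, the pseudoinverse equals $\matr{A}^{T}(\matr{A}\matr{A}^{T})^{-1}$ and a short trace calculation gives $\xi_{2}(\set{S}_{m}) = \text{tr}(\matr{G}_{m}^{-1})$ where $\matr{G}_{m} = \matr{M}_{\set{S}_{m}}\matr{U}_{k}\matr{U}_{k}^{T}\matr{M}_{\set{S}_{m}}^{T}$. Permuting the row/column for $v$ to the bottom-right of $\matr{G}_{m}$ and applying the Schur-complement block-inverse formula, with $(\vect{g}, h)$ denoting the removed row/column and $\matr{G}'$ the remaining principal submatrix (still positive definite), yields
\[ \text{tr}(\matr{G}_{m}^{-1}) - \text{tr}(\matr{G}'^{-1}) = \frac{1 + \vect{g}^{T}\matr{G}'^{-2}\vect{g}}{h - \vect{g}^{T}\matr{G}'^{-1}\vect{g}}. \]
The denominator is the Schur complement, strictly positive because $\matr{G}_{m}$ is positive definite, and the numerator is at least $1$, so $\Delta_{2}(\set{S}_{m},v) > 0$ and $\tau(\set{S}_{m},v) > 0$.

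For (\ref{eq:greedy_sampling_over_k}), fix $m' > k$ and $v \in \set{S}_{m'} \backslash \set{S}_{k}$. Both $\set{S}_{m'}$ and $\set{S}_{m'} \backslash \{v\}$ contain the uniqueness set $\set{S}_{k}$, so $\matr{M}_{\cdot}\matr{U}_{k}$ has full column rank $k$ in both. The pseudoinverse is now $(\matr{A}^{T}\matr{A})^{-1}\matr{A}^{T}$, giving $\xi_{2}(\set{S}_{m'}) = \text{tr}(\matr{H}^{-1})$ with $\matr{H} = \matr{U}_{k}^{T}\matr{M}_{\set{S}_{m'}}^{T}\matr{M}_{\set{S}_{m'}}\matr{U}_{k}$. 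Let $\vect{u}_{v}$ be the $v$-th row of $\matr{U}_{k}$ (as a column vector) and $\matr{H}' = \matr{H} - \vect{u}_{v}\vect{u}_{v}^{T}$, which is positive definite since $\set{S}_{m'} \backslash \{v\} \supseteq \set{S}_k$. The Sherman--Morrison identity gives
\[ \text{tr}(\matr{H}^{-1}) - \text{tr}(\matr{H}'^{-1}) = -\frac{\vect{u}_{v}^{T}\matr{H}'^{-2}\vect{u}_{v}}{1 + \vect{u}_{v}^{T}\matr{H}'^{-1}\vect{u}_{v}} \leq 0, \]
so $\Delta_{2}(\set{S}_{m'},v) \leq 0$ and $\tau(\set{S}_{m'},v) \leq 0$.

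The main subtlety is that the pseudoinverse has different closed forms in the two regimes, so the two rank conditions (full row rank when $|\set{S}|\leq k$, full column rank when $|\set{S}|\geq k$) must both be verified from the greedy-optimal assumption before invoking the block-inverse/Sherman--Morrison machinery; I would handle this in a short opening lemma. The edge case $m=1$ (where $\matr{G}'$ is empty) fits the same formula with $\xi_{2}$ collapsing to $1/h > 0$, and the case $\vect{u}_v = 0$ in the second regime trivially gives $\Delta_2 = 0$, consistent with the weak inequality.
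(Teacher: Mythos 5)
Your proof is correct, but it takes a genuinely different route from the paper's. The paper never computes $\Delta_{2}$ directly: it shows that for $m\leq k$ every row of $\matr{M}_{\set{S}_{m}}\matr{U}_{k}$ is rank-critical (noiseless optimality forces $\mathrm{rank}(\matr{M}_{\set{S}_{k}}\matr{U}_{k})=k$, and subsets of independent rows stay independent), so $\Delta_{1}(\set{S}_{m},v)=-1$, and then invokes the equivalence $\Delta_{1}<0\iff\Delta_{2}>0$ from Appendix B, which is itself proved by Cauchy interlacing of the eigenvalues of the Gram matrices together with a count of zero eigenvalues; for $m'>k$ the rank is saturated, so $\Delta_{1}=0$ and the same equivalence gives $\Delta_{2}\leq 0$. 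You instead establish the sign of $\Delta_{2}$ by direct computation, splitting the pseudoinverse into its full-row-rank and full-column-rank closed forms and applying a Schur-complement block inverse in the sub-bandwidth regime and Sherman--Morrison above it; your opening rank observations coincide with the paper's (both rely on the greedy sets being nested). The paper's route is shorter because the sign equivalence is already packaged as a lemma reused by its other theorems; your route is self-contained and buys strictly more, namely exact expressions
\begin{equation}
\Delta_{2}(\set{S}_{m},v)=\frac{1+\vect{g}^{T}\matr{G}'^{-2}\vect{g}}{h-\vect{g}^{T}\matr{G}'^{-1}\vect{g}},\qquad
\Delta_{2}(\set{S}_{m'},v)=-\frac{\vect{u}_{v}^{T}\matr{H}'^{-2}\vect{u}_{v}}{1+\vect{u}_{v}^{T}\matr{H}'^{-1}\vect{u}_{v}},
\end{equation}
which quantify the threshold $\tau$ rather than merely certifying its sign, at the cost of having to verify the two rank conditions separately before each closed form of the pseudoinverse may be used (which you correctly flag and handle, including the $m=1$ and $\vect{u}_{v}=0$ edge cases).
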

\begin{proof}
    See Appendix \ref{app:optimal_noiseless_schemes_immediately_satisfy}.
\end{proof}

\begin{remark}
\label{remark:ADE_are_noiseless_optimal}
    Theorem \ref{thm:noiseless_optimality_means_noise_sensitivity} applies to A, D and E-optimal sampling as they are optimal in the noiseless case (see Appendix \ref{app:proof_of_remark_ADE_are_noiseless_optimal}). 
    \end{remark}
\begin{remark}
{\color{black} 
    Equation (\ref{eq:greedy_sampling_over_k}) says that removing one vertex from a sample set of size $m' > k$ chosen by a noiseless-optimal sampling scheme does not reduce error on average. Of course, if one removes \emph{multiple} vertices to bring the sample size below $k$ then the expected sample error may decrease.
}
\end{remark}


\section{Experiments}
\label{experiments_sec}

\begin{figure*}%
    \label{LS_ER_MSE_fig}
    \centering
    \begin{subfigure}{0.6\columnwidth}
    \includegraphics[width=\columnwidth]{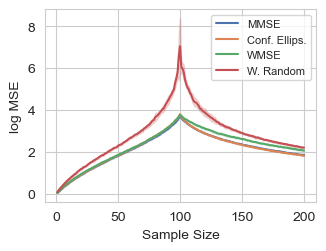}
    \caption{SNR = $10^{-1}$}
    \label{MSE_subfiga}
    \end{subfigure}\hfill
    \begin{subfigure}{0.6\columnwidth}
    \includegraphics[width=\columnwidth]{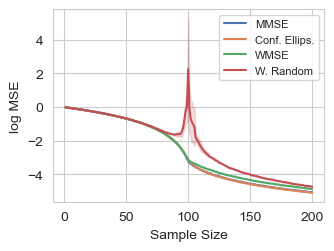}%
    \caption{SNR = $10^{2}$}%
    \label{MSE_subfigb}%
    \end{subfigure}\hfill%
    \begin{subfigure}{0.6\columnwidth}
    \includegraphics[width=\columnwidth]{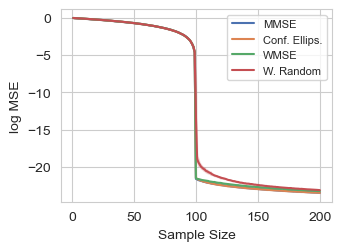}%
    \caption{SNR = $10^{10}$}%
    \label{MSE_subfigc}%
    \end{subfigure}%
    \caption{Average MSE for LS reconstruction on ER Graphs (\#vertices=1000, bandwidth = 100) with different SNRs}
\label{LS_ER_MSE_fig}
\end{figure*}

\begin{figure*}%
    \centering
    \begin{subfigure}{0.6\columnwidth}
    \includegraphics[width=\columnwidth]{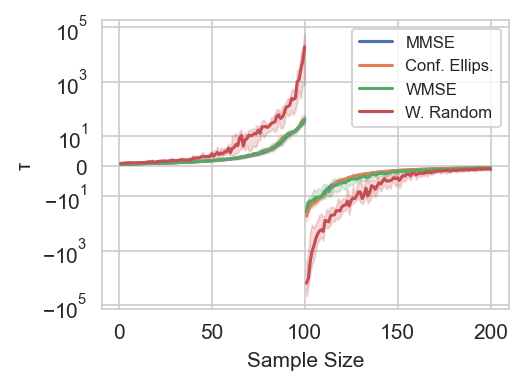}
    \caption{Erdos-Renyi}
    \label{snr_ER}
    \end{subfigure}
    \hfill
    \begin{subfigure}{0.6\columnwidth}
    \includegraphics[width=\columnwidth]{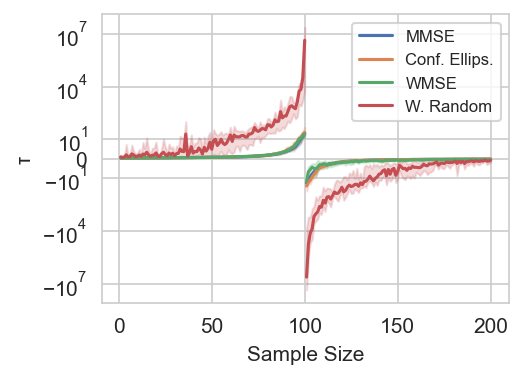}%
    \caption{Barabasi-Albert}%
    \label{snr_BA}%
    \end{subfigure}
    \hfill%
    \begin{subfigure}{0.6\columnwidth}
    \includegraphics[width=\columnwidth]{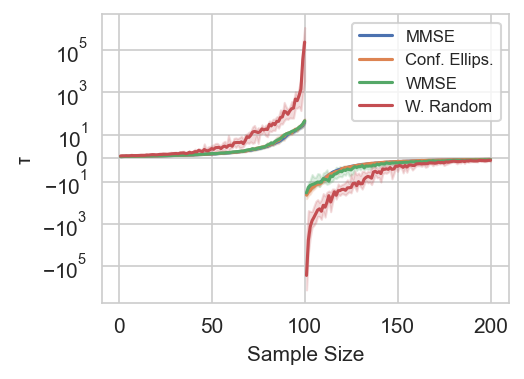}%
    \caption{SBM}%
    \label{snr_SBM}%
    \end{subfigure}%
    \caption{$\tau$ for different random graph models under LS reconstruction (\#vertices = 1000, bandwidth = 100)}
\label{LS_SNR_Threshold_plots}
\end{figure*}

\subsection{Experimental Setup}
We present two experiments to illustrate when removing vertices from the observation set can reduce MSE. For different types of graphs, we present plots of  {\color{black} $\mathbb{E}[\textrm{MSE}_{\set{S}_{i}}]$ (Fig. \ref{LS_ER_MSE_fig}) and $\tau(\set{S}_{i},v_{i})$ (Fig. \ref{LS_SNR_Threshold_plots}) as the sample size $i$ increases under different sampling schemes. Results are presented with 90\% confidence intervals.}
\subsubsection{Sample Set Selection}
The literature provides several approximations to make vertex sample set selection efficient. For example, approximating the projection matrix $\matr{U}_{k}\matr{U}_{k}^{T}$ \cite{wang2018optimal} (subsets of which are used to compute optimality criteria) with a polynomial in $\matr{L}$, and approximating optimality criteria for easier computation \cite{bai2020fast}.

For our experiments, we generate the vertex sample sets greedily using the exact analytical forms instead of approximations. We use the explicit forms of A/D/E optimality (see Eqns. (\ref{eq:A-optimality}), (\ref{eq:D-optimality}), (\ref{eq:E-optimality})) and directly compute $\matr{U}_{k}\matr{U}_{k}^{T}$ throughout. We compare A/D/E optimal schemes (MMSE/Conf. Ellips./WMSE) to the Weighted Random sampling scheme in \cite{puy2018random}.

\subsubsection{Graph Generation}
We consider two graph sizes -- small (100 vertices) and large (1000 vertices) -- for 10 instantiations of each of the following {\color{black}unweighted} random graph models:
\begin{itemize}
    \item Erdős–Rényi (ER) with edge probability $p=0.8$
    \item Barabási-Albert (BA) with a preferential attachment to 3 vertices at each step of its construction
    \item Stochastic Blockmodel (SBM) with intra- and inter-cluster edge probabilities of $0.7 \text{ and }0.1$ respectively
\end{itemize}

\subsubsection{Signal Generation}
To compute the MSE, we generate 200 signals as follows:
\begin{enumerate}
    \item Generate $\vect{x}_{raw} \sim \mathcal{N}(\vect{0}, \matr{U}_{k}\matr{U}_{k}^{T})$, $\vect{\epsilon}_{raw} \sim \mathcal{N}(\vect{0}, \matr{I}_{N})$
    \item Normalise: $\vect{x} = \frac{\vect{x}_{raw}}{||\vect{x}_{raw}||_{2}}$ and $\vect{\epsilon} = \frac{\vect{\epsilon}_{raw}}{ ||\vect{\epsilon}_{raw}||_{2}}$
    \item Return $\vect{y} = \vect{x} + \frac{\vect{\epsilon}}{\sqrt{\text{SNR}}}$
\end{enumerate}

\subsubsection{Parameters}
We set the bandwidth to $\lfloor N / 10 \rfloor$, as per \cite{bai2020fast}.  We pick various SNRs to demonstrate that the effect occurs below the threshold $\tau$ and disappears above it, which here means $10^{-1}, 10^{2}, 10^{10}$ (In $dB$: $-10dB$, $20dB$ and $100dB$).

\subsection{Experimental Results}
{\color{black}
Fig. \ref{MSE_subfiga} shows that for low SNRs, optimal sampling schemes (the Green, Orange and Blue lines) lead MSE to increase with each additional sample until the sample size reaches the bandwidth, illustrating Theorem \ref{thm:noiseless_optimality_means_noise_sensitivity}. 
On the other hand, for high SNRs (Fig. \ref{MSE_subfigc}), MSE decreases monotonically as sample size increases under all presented sampling schemes, illustrating Theorem \ref{main_ls}. Fig \ref{MSE_subfigb} shows an intermediate case: for SNRs between these extremes some schemes (Red line) lead to increasing MSE with increasing sample size, while other schemes (Blue, Green, Orange lines) do not.

Interestingly, Fig. \ref{MSE_subfiga} shows that at a very low SNR of $10^{-1}$, the optimal sample size under LS reconstruction is zero. One interpretation of this observation is that, under very high noise, if you throw away all of your samples and assume that your underlying signal is $\vect{0}$, you will on average have a lower MSE than if you reconstruct with LS from your observed samples. This follows from (\ref{eq:xi_decomp}) and the positivity of $\xi_{1}$ and $\xi_{2}$ -- if your error increases unboundedly with noise, at a sufficiently high noise level your MSE will be above the fixed MSE you would get by approximating your signal with $\vect{0}$.}

Fig. \ref{snr_ER} demonstrates experimentally {\color{black}what the SNR threshold $\tau$ looks like in practice. F}or the ER graphs ($N = 1000)$, for signals with $k=100$ and $\textrm{SNR} < 10$, under most `optimal' schemes (Blue, Green, Orange) sampling 90 vertices rather than 100 vertices reduces both observation cost and reconstruction error. Similar results can be seen for Barabasi-Albert graphs (Fig. \ref{snr_BA}) and different SNRs. This demonstrates that MSE increasing with sample size happens under conditions which might occur in practice, and is not simply a theoretical curiosity.

We present plots for the larger graph instances here; the smaller graphs (Fig. \ref{fig:LS_SNR_Threshold_plots_small}) follow the same pattern as the ER graphs and are presented in Appendix \ref{plot_appendix}.

\section{Discussion}
In this paper we studied the impact of sample size on LS reconstruction of noisy $k$-bandlimited graph signals. We showed theoretically and experimentally that reconstruction error is not necessarily monotonic in sample size - that at sufficiently low SNRs, reconstruction error can sometimes be improved by \emph{removing} a vertex from a sample set, even if the sample set was picked by a greedy optimal sampling scheme given a fixed sample size.

Our finding reveals that certain existing results in the literature for noiseless settings may not necessarily generalise to the noisy case. In addition, it further demonstrates the need to consider both optimal sample size selection and reconstruction methods at the same time. For example, the limitation of ordinary LS reconstruction may be mitigated by regularisation schemes such as that proposed in \cite{chamon2017greedy}.


Future work includes extending the analysis in this paper to cover other reconstruction operators such as GLR reconstruction, providing bounds on $\xi_1$ and $\xi_2$ to create noise-aware sample size bounds,  experimenting with other graph models such as Ring graphs or studying early-stopping schemes for LS reconstruction.


\bibliographystyle{IEEEtran}
\bibliography{main}
\clearpage

\appendices

\section{Under LS reconstruction, $\Delta_{1} \leq 0$}
\label{app:delta_1_non_positive}
For LS we have:  \[\matr{R}_{\set{S}} = \matr{U}_{k}(\matr{M}_{\set{S}}\matr{U}_{k})^{\dagger}.\]

\begin{lemma}
\label{lemma:frob_orthog}
    For any matrix $\matr{A}$, $||\matr{U}_{k}\matr{A}||^{2}_{F} = ||\matr{A}||^{2}_{F}$
\end{lemma}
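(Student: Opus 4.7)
The plan is to exploit the fact that $\matr{U}_{k}$ consists of $k$ orthonormal columns of an orthogonal matrix, so $\matr{U}_{k}^{T}\matr{U}_{k} = \matr{I}_{k}$, and then use the trace characterisation of the Frobenius norm. No case analysis or dimensional subtlety is required, which is why this appears as a stand-alone preparatory lemma rather than in the body of the paper.

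Concretely, I would start from the identity $||\matr{B}||_{F}^{2} = \text{tr}(\matr{B}^{T}\matr{B})$ applied to $\matr{B} = \matr{U}_{k}\matr{A}$, expand $\matr{B}^{T}\matr{B} = \matr{A}^{T}\matr{U}_{k}^{T}\matr{U}_{k}\matr{A}$, and substitute $\matr{U}_{k}^{T}\matr{U}_{k} = \matr{I}_{k}$ (which follows from $\matr{U}$ being orthogonal and $\matr{U}_{k}$ being a column-submatrix of $\matr{U}$). This collapses the expression to $\text{tr}(\matr{A}^{T}\matr{A}) = ||\matr{A}||_{F}^{2}$.

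The only mild care needed is to note that $\matr{A}$ must have $k$ rows for $\matr{U}_{k}\matr{A}$ to be defined; this is implicit in the statement and will be the case in every invocation (where $\matr{A}$ is either $(\matr{M}_{\set{S}}\matr{U}_{k})^{\dagger}$ or similar). There is essentially no obstacle: the result is a one-line consequence of orthonormality of the columns of $\matr{U}_{k}$, and I would present it as such.
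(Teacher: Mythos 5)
Your proof is correct and follows essentially the same route as the paper: both reduce $||\matr{U}_{k}\matr{A}||_{F}^{2}$ to a trace and use $\matr{U}_{k}^{T}\matr{U}_{k} = \matr{I}_{k}$ (the paper writes $\text{tr}(\matr{U}_{k}\matr{A}\matr{A}^{T}\matr{U}_{k}^{T})$ and invokes cyclicity of the trace, whereas you use $\text{tr}(\matr{A}^{T}\matr{U}_{k}^{T}\matr{U}_{k}\matr{A})$ directly, which is an immaterial difference). No gaps.
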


\begin{proof}
    \begin{align*}
    ||\matr{U}_{k}\matr{A}||^{2}_{F} 
    &= \text{tr}(\matr{U}_{k}\matr{A}\matr{A}^{T}\matr{U}_{k}^{T})
    = \text{tr}(\matr{U}_{k}^{T}\matr{U}_{k}\matr{A}\matr{A}^{T}) \\
    &= \text{tr}(\matr{A}\matr{A}^{T}) = ||\matr{A}||^{2}_{F}.
\end{align*}
\end{proof}

\begin{lemma}
    \label{lemma:LS_xi_1_is_rank}
    For LS, $\xi_{1}(\set{S}) = k - \text{rank}(\matr{M}_{\set{S}}\matr{U}_{k})$.
\end{lemma}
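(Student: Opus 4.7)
The plan is to substitute the LS reconstruction formula into the definition of $\xi_1$, peel off the $\matr{U}_k$ factor using Lemma \ref{lemma:frob_orthog}, and then recognise the remaining expression as the complement of an orthogonal projector whose rank equals that of $\matr{M}_{\set{S}}\matr{U}_k$.

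First, I would substitute $\matr{R}_{\set{S}} = \matr{U}_{k}(\matr{M}_{\set{S}}\matr{U}_{k})^{\dagger}$ into $\xi_1(\set{S}) = \|\matr{U}_k - \matr{R}_{\set{S}}\matr{M}_{\set{S}}\matr{U}_k\|_F^2$ and factor out $\matr{U}_k$ on the left to obtain
\[
\xi_1(\set{S}) = \left\|\matr{U}_k\!\left(\matr{I}_k - (\matr{M}_{\set{S}}\matr{U}_k)^{\dagger}(\matr{M}_{\set{S}}\matr{U}_k)\right)\right\|_F^2.
\]
Applying Lemma \ref{lemma:frob_orthog} removes the $\matr{U}_k$ and leaves $\|\matr{I}_k - \matr{A}^{\dagger}\matr{A}\|_F^2$ where $\matr{A} = \matr{M}_{\set{S}}\matr{U}_k$.

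Next, I would invoke the standard property of the Moore--Penrose pseudoinverse that $\matr{A}^{\dagger}\matr{A}$ is the orthogonal projector onto the row space of $\matr{A}$, so $\matr{I}_k - \matr{A}^{\dagger}\matr{A}$ is also an orthogonal projector (onto the orthogonal complement), hence symmetric and idempotent. Its rank is $k - \text{rank}(\matr{A})$. For any orthogonal projector $\matr{P}$ of rank $r$, all eigenvalues lie in $\{0,1\}$ with $r$ ones, so $\|\matr{P}\|_F^2 = \text{tr}(\matr{P}^T\matr{P}) = \text{tr}(\matr{P}) = r$. Applying this gives
\[
\xi_1(\set{S}) = k - \text{rank}(\matr{M}_{\set{S}}\matr{U}_k),
\]
as desired.

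There is no real obstacle here: the argument is a short chain of standard pseudoinverse and projection identities, and the only paper-specific ingredient is Lemma \ref{lemma:frob_orthog}, which has already been established. The one point worth being careful about is the case $|\set{S}| < k$, where $\matr{A}^{\dagger}$ is the minimum-norm pseudoinverse --- but the identity $\matr{A}^{\dagger}\matr{A} = \Pi_{\text{row}(\matr{A})}$ holds for any matrix, so the rank formula remains valid regardless of whether the sample size is above or below the bandwidth.
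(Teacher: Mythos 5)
Your proposal is correct and follows essentially the same route as the paper: substitute the LS reconstruction matrix, strip the $\matr{U}_k$ factor via Lemma \ref{lemma:frob_orthog}, and identify $(\matr{M}_{\set{S}}\matr{U}_k)^{\dagger}(\matr{M}_{\set{S}}\matr{U}_k)$ as an orthogonal projector whose trace equals the rank. The only cosmetic difference is that you observe directly that $\matr{I}_k - \matr{\Pi}$ is itself a projector of rank $k - \mathrm{rank}(\matr{M}_{\set{S}}\matr{U}_k)$, whereas the paper expands the squared Frobenius norm and uses idempotency; these are equivalent.
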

\begin{proof}
Using Lemma \ref{lemma:frob_orthog},
    \begin{align*}
        \xi_{1}(\set{S}) &= || \matr{U}_{k} - \matr{R}_{\set{S}}\matr{M}_{\set{S}}\matr{U}_{k} ||^{2}_{F} \\
        &= || \matr{U}_{k} - \matr{U}_{k}(\matr{M}_{\set{S}}\matr{U}_{k})^{\dagger}\matr{M}_{\set{S}}\matr{U}_{k} ||^{2}_{F} \\
        &= || \matr{I}_{k} -(\matr{M}_{\set{S}}\matr{U}_{k})^{\dagger}\matr{M}_{\set{S}}\matr{U}_{k}||^{2}_{F}
    \end{align*}
    Let $\matr{\Pi} = (\matr{M}_{\set{S}}\matr{U}_{k})^{\dagger}\matr{M}_{\set{S}}\matr{U}_{k}$. $\matr{\Pi}$ is of the form $\matr{A}^{\dagger}\matr{A}$, so is a symmetric orthogonal projection onto the range of $(\matr{M}_{\set{S}}\matr{U}_{k})^{T}$ \cite[p.~258]{golub13}. Orthogonal projections are idempotent ($\matr{\Pi} = \matr{\Pi}^{2}$) hence have eigenvalues which are $0$ or $1$, and therefore $\text{tr}(\matr{\Pi}) = \text{rank}((\matr{M}_{\set{S}}\matr{U}_{k})^{T}) = \text{rank}(\matr{M}_{\set{S}}\matr{U}_{k})$. We then have:
    \begin{align*}
        \xi_{1}(\set{S}) &= ||\matr{I}_{k} - \matr{\Pi}||^{2}_{F} \\
        &= \text{tr}((\matr{I}_{k} - \matr{\Pi})(\matr{I}_{k} - \matr{\Pi})^{T})\\
        & =\text{tr}((\matr{I}_{k} - \matr{\Pi})(\matr{I}_{k} - \matr{\Pi})) \\
        &= \text{tr}(\matr{I}_{k} - 2\matr{\Pi} + \matr{\Pi}^{2}) \\
        &= \text{tr}(\matr{I}_{k} - \matr{\Pi}) \\
        &= \text{tr}(\matr{I}_{k}) - \text{tr}(\matr{\Pi}) \\
        &= k - \text{rank}(\matr{M}_{\set{S}}\matr{U}_{k}).
    \end{align*}
\end{proof}

\begin{lemma}
\label{lemma:delta_1_ls_is_discrete}
    For LS, $\Delta_{1}(\set{S},v) \in \{0, -1\}$.
\end{lemma}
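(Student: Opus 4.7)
The plan is to reduce the claim to a rank computation via Lemma~\ref{lemma:LS_xi_1_is_rank} and then invoke the standard fact that deleting a single row from a matrix changes its rank by at most one.

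First I would apply Lemma~\ref{lemma:LS_xi_1_is_rank} to both $\xi_{1}(\set{S})$ and $\xi_{1}(\set{S}\setminus\{v\})$; the two $k$ terms cancel, leaving
\[
\Delta_{1}(\set{S},v) \;=\; \text{rank}(\matr{M}_{\set{S}\setminus\{v\}}\matr{U}_{k}) \;-\; \text{rank}(\matr{M}_{\set{S}}\matr{U}_{k}).
\]
Next I would observe that $\matr{M}_{\set{S}\setminus\{v\}}\matr{U}_{k}$ is obtained from $\matr{M}_{\set{S}}\matr{U}_{k}$ by deleting exactly one row, namely the row indexed by $v$ in the sampling matrix. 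Deleting a row cannot increase the rank, since the row space can only shrink; and it cannot decrease the rank by more than one, since removing a single vector from a spanning set of the row space drops its dimension by at most one. Hence the difference above lies in $\{-1,0\}$, which directly gives $\Delta_{1}(\set{S},v)\in\{0,-1\}$.

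The only point to be slightly careful about is the sign convention: because $\xi_{1}$ equals $k$ minus a rank, a drop of one in $\text{rank}(\matr{M}_{\set{S}}\matr{U}_{k})$ corresponds to an increase of one in $\xi_{1}(\set{S}\setminus\{v\})$ relative to $\xi_{1}(\set{S})$, which gives $\Delta_{1} = -1$; the case where the rank is preserved gives $\Delta_{1} = 0$. I do not foresee any real obstacle here: the statement is essentially an immediate corollary of Lemma~\ref{lemma:LS_xi_1_is_rank} together with the rank-of-submatrix inequality.
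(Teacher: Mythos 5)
Your proof is correct and follows essentially the same route as the paper's: both reduce $\Delta_{1}$ to a difference of ranks via Lemma~\ref{lemma:LS_xi_1_is_rank} and then invoke the fact that deleting one row changes the rank by at most one. Your handling of the sign convention is also accurate, so nothing further is needed.
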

\begin{proof}
    
Removing a vertex from $\set{S}$ removes a row from $\matr{M}_{\set{S}}\matr{U}_{k}$, reducing the rank by $0$ or $1$. 
\begin{align}
\Delta_{1}(\set{S},v) &= \xi_{1}(\set{S}) - \xi_{1}(\set{S} \backslash \{v\}) \nonumber \\
&= - \text{rank}(\matr{M}_{\set{S}}\matr{U}_{k}) + \text{rank}(\matr{M}_{\set{S} \backslash \{v\}}\matr{U}_{k}) \nonumber \\
&\in \{0, -1\}. \nonumber
\end{align}
\end{proof}

Non-positivity of $\Delta_{1}$ immediately follows from Lemma \ref{lemma:delta_1_ls_is_discrete}.

\section{Under LS reconstruction, $\Delta_{1} < 0 \iff \Delta_{2} > 0$}
\label{app:Delta_LS_opposite_signs}
We first need the following lemmas.

\begin{lemma}
\label{lemma:simplify_xi_2}
\begin{equation}
    \xi_{2}(\set{S}) = \sum_{\lambda_{i}^{\set{S}} \neq 0} { \frac{1}{\lambda_{i}^{\set{S}}}  }
\end{equation}
    where $\lambda_{i}^{\set{S}}$ is the $i^{th}$ eigenvalue of $(\matr{M}_{\set{S}}\matr{U}_{k})(\matr{M}_{\set{S}}\matr{U}_{k})^{T}$.
\end{lemma}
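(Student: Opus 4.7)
The plan is to start from the definition $\xi_2(\set{S}) = \|\matr{R}_{\set{S}}\|_F^2 = \|\matr{U}_k(\matr{M}_{\set{S}}\matr{U}_k)^{\dagger}\|_F^2$, peel off $\matr{U}_k$ using Lemma \ref{lemma:frob_orthog} (which applies since $\matr{U}_k$ has orthonormal columns), and reduce the problem to showing that for any matrix $\matr{A}$,
\[
\|\matr{A}^{\dagger}\|_F^2 = \sum_{\lambda_i \neq 0} \frac{1}{\lambda_i},
\]
where the $\lambda_i$ are the eigenvalues of $\matr{A}\matr{A}^T$. Applying this to $\matr{A} = \matr{M}_{\set{S}}\matr{U}_k$ gives the claim.

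To establish that identity I would take the (thin) SVD $\matr{A} = \matr{P}\matr{\Sigma}\matr{Q}^T$ with singular values $\sigma_1,\dots,\sigma_r$, so that $\matr{A}^{\dagger} = \matr{Q}\matr{\Sigma}^{\dagger}\matr{P}^T$ with diagonal entries $1/\sigma_i$ in the nonzero positions. Since multiplying by orthogonal matrices preserves the Frobenius norm, $\|\matr{A}^{\dagger}\|_F^2 = \|\matr{\Sigma}^{\dagger}\|_F^2 = \sum_{\sigma_i \neq 0} 1/\sigma_i^2$. Simultaneously, $\matr{A}\matr{A}^T = \matr{P}\matr{\Sigma}\matr{\Sigma}^T \matr{P}^T$, so the nonzero eigenvalues of $\matr{A}\matr{A}^T$ are exactly $\sigma_i^2$. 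Substituting $\lambda_i = \sigma_i^2$ yields the desired expression.

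I do not anticipate a real obstacle here; the proof is essentially a bookkeeping exercise in SVD and the pseudoinverse. The only subtlety worth being careful about is the indexing: $(\matr{M}_{\set{S}}\matr{U}_k)(\matr{M}_{\set{S}}\matr{U}_k)^T$ is $|\set{S}|\times|\set{S}|$, which may have more or fewer nonzero eigenvalues than $k$ depending on whether $|\set{S}| < k$ or $\matr{M}_{\set{S}}\matr{U}_k$ is rank-deficient, but restricting the sum to $\lambda_i^{\set{S}} \neq 0$ sidesteps this cleanly. This also matches naturally with the form of $\xi_1$ from Lemma \ref{lemma:LS_xi_1_is_rank}, setting up the later argument that $\Delta_1$ and $\Delta_2$ have opposite signs.
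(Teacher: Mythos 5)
Your proposal is correct and follows essentially the same route as the paper: apply Lemma \ref{lemma:frob_orthog} to reduce to $\|(\matr{M}_{\set{S}}\matr{U}_k)^{\dagger}\|_F^2$, then use the SVD characterisation of the pseudoinverse (nonzero singular values map to their reciprocals) together with the fact that the nonzero eigenvalues of $\matr{A}\matr{A}^T$ are the squared singular values of $\matr{A}$. The only difference is that you write out the SVD explicitly where the paper cites the corresponding facts from Golub and Van Loan.
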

\begin{proof} By definition and Appendix \ref{app:delta_1_non_positive}, Lemma \ref{lemma:frob_orthog}
\begin{flalign*}
\xi_{2}(\set{S}) &= ||\matr{R}_{\set{S}}||^{2}_{F} \\
&= ||\matr{U}_{k}(\matr{M}_{\set{S}}\matr{U}_{k})^{\dagger}||^{2}_{F} \\
&= ||(\matr{M}_{\set{S}}\matr{U}_{k})^{\dagger}||^{2}_{F}
\end{flalign*}
    which is the sum of the squares of the singular values of $(\matr{M}_{\set{S}}\matr{U}_{k})^{\dagger}$ \cite[Corollary 2.4.3]{golub13}. The pseudoinverse maps the singular values of $\matr{M}_{\set{S}}\matr{U}_{k}$ onto the singular values of $(\matr{M}_{\set{S}}\matr{U}_{k})^{\dagger}$ in the following way \cite[Section 5.5.2]{golub13}:
    \begin{equation}
    \sigma_{i}((\matr{M}_{\set{S}}\matr{U}_{k})^{\dagger}) =
        \begin{cases}
            0 &\textrm{if }\sigma_{i}(\matr{M}_{\set{S}}\matr{U}_{k}) = 0 \\
            \sigma_{i}(\matr{M}_{\set{S}}\matr{U}_{k})^{-1} &\textrm{otherwise}
        \end{cases}
    \end{equation}
    and the squares of the singular values of $\matr{M}_{\set{S}}\matr{U}_{k}$ are $\lambda_{i}$ \cite[Eq. (8.6.1)]{golub13}. Summing them gives the result.
\end{proof}

\begin{lemma}
\label{lemma:square_to_rect_rank}
\begin{equation}
\text{rank}((\matr{M}_{\set{S}}\matr{U}_{k})(\matr{M}_{\set{S}}\matr{U}_{k})^T) = \text{rank}(\matr{M}_{\set{S}}\matr{U}_{k}) \leq k. \nonumber
\end{equation}
\end{lemma}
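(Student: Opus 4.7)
The plan is to prove the equality and the inequality separately using standard linear algebra facts applied to $\matr{A} := \matr{M}_{\set{S}}\matr{U}_{k}$.

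First, I would establish the general fact that $\text{rank}(\matr{A}\matr{A}^{T}) = \text{rank}(\matr{A})$ for any real matrix $\matr{A}$. The cleanest route is via null spaces: show $\ker(\matr{A}^{T}) = \ker(\matr{A}\matr{A}^{T})$. The inclusion $\ker(\matr{A}^{T}) \subseteq \ker(\matr{A}\matr{A}^{T})$ is immediate. For the reverse, if $\matr{A}\matr{A}^{T}\vect{x} = \vect{0}$ then $\vect{x}^{T}\matr{A}\matr{A}^{T}\vect{x} = \|\matr{A}^{T}\vect{x}\|_{2}^{2} = 0$, forcing $\matr{A}^{T}\vect{x} = \vect{0}$. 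By the rank–nullity theorem applied to $\matr{A}^{T}$ and $\matr{A}\matr{A}^{T}$ (both of which have the same number of columns), their ranks are equal, and $\text{rank}(\matr{A}^{T}) = \text{rank}(\matr{A})$ gives the desired equality. Substituting $\matr{A} = \matr{M}_{\set{S}}\matr{U}_{k}$ yields the first equality in the lemma.

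For the inequality, I would simply observe that $\matr{U}_{k} \in \mathbb{R}^{N \times k}$ and $\matr{M}_{\set{S}} \in \mathbb{R}^{|\set{S}| \times N}$, so $\matr{M}_{\set{S}}\matr{U}_{k} \in \mathbb{R}^{|\set{S}| \times k}$ has exactly $k$ columns. Hence its rank is bounded above by $k$, giving $\text{rank}(\matr{M}_{\set{S}}\matr{U}_{k}) \leq k$.

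There is no real obstacle here; the argument is entirely standard, and both facts can be cited from \cite{golub13} if the authors prefer a terse presentation rather than the explicit kernel computation. The only minor care is to ensure the rank-nullity step is correctly applied to matrices with the same number of columns (so that equality of null spaces transfers to equality of ranks), which is why I would state $\ker(\matr{A}^{T}) = \ker(\matr{A}\matr{A}^{T})$ rather than the more commonly quoted $\ker(\matr{A}) = \ker(\matr{A}^{T}\matr{A})$.
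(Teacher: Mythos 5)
Your proof is correct, but your argument for the equality differs from the paper's. The paper works through the singular value decomposition: it identifies $\text{rank}(\matr{M}_{\set{S}}\matr{U}_{k})$ with the number of strictly positive singular values of $\matr{M}_{\set{S}}\matr{U}_{k}$ and then invokes the fact that the squared singular values of a matrix are the eigenvalues of its Gram matrix, so the count of positive eigenvalues of $(\matr{M}_{\set{S}}\matr{U}_{k})(\matr{M}_{\set{S}}\matr{U}_{k})^{T}$ is the same, citing \cite{golub13} for both steps. You instead prove $\ker(\matr{A}^{T}) = \ker(\matr{A}\matr{A}^{T})$ directly via the observation that $\matr{A}\matr{A}^{T}\vect{x} = \vect{0}$ forces $\|\matr{A}^{T}\vect{x}\|_{2}^{2} = 0$, and then transfer to ranks by rank--nullity. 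Your route is more elementary and self-contained (no appeal to the SVD), and your care in stating the kernel identity for $\matr{A}^{T}$ rather than $\matr{A}$ --- so that both matrices share the same number of columns --- is exactly the right precaution. The paper's route has the advantage of fitting naturally alongside its other lemmas (notably Lemma \ref{lemma:simplify_xi_2}, which already reasons about the singular values of $\matr{M}_{\set{S}}\matr{U}_{k}$ and the eigenvalues $\lambda_{i}^{\set{S}}$ of the Gram matrix), so the same machinery is reused. The inequality argument --- $k$ columns bounds the column rank, and row rank equals column rank --- is identical in both.
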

\begin{proof}

For the equality: $\textrm{rank}(\matr{M_{\set{S}}}\matr{U}_{k})$ is the number of strictly positive singular values it has \cite[Corollary 2.4.6]{golub13}. By \cite[Eq. (8.6.2)]{golub13}, this is the same as the number of strictly positive eigenvalues of $(\matr{M}_{\set{S}}\matr{U}_{k})(\matr{M}_{\set{S}}\matr{U}_{k})^T)$, which is $\text{rank}((\matr{M}_{\set{S}}\matr{U}_{k})(\matr{M}_{\set{S}}\matr{U}_{k})^T)$.

For the inequality: $\matr{M}_{\set{S}}\matr{U}_{k}$ has $k$ columns and so must have column rank less than or equal to $k$. Row rank being equal to column rank gives the result.
\end{proof}

\begin{lemma}
\label{lemma:LS_delta_1_is_0_improves_stability}
    For LS, $\Delta_{1} = 0 \iff \Delta_{2} \leq 0$.
\end{lemma}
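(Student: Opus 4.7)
The plan is to leverage the formula $\xi_2(\set{S}) = \sum_{\lambda_i^{\set{S}} \neq 0} 1/\lambda_i^{\set{S}}$ from Lemma~\ref{lemma:simplify_xi_2}, together with the fact that $\matr{M}_{\set{S}\setminus\{v\}}\matr{U}_{k}$ is obtained from $\matr{M}_{\set{S}}\matr{U}_{k}$ by deleting a single row. The key tool is the classical singular-value interlacing theorem for row deletion: if $A'$ is obtained from $A$ by removing one row, then $\sigma_{i}(A) \geq \sigma_{i}(A') \geq \sigma_{i+1}(A)$ for every $i$ (where undefined singular values are taken to be $0$). Since the nonzero eigenvalues $\lambda_i^{\set{S}}$ of $(\matr{M}_{\set{S}}\matr{U}_{k})(\matr{M}_{\set{S}}\matr{U}_{k})^T$ are exactly the squared nonzero singular values of $\matr{M}_{\set{S}}\matr{U}_{k}$, this interlacing directly controls $\xi_2$.

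First I would note that Lemma~\ref{lemma:delta_1_ls_is_discrete} already gives $\Delta_{1}(\set{S},v) \in \{0,-1\}$, with $\Delta_{1} = 0$ meaning $\text{rank}(\matr{M}_{\set{S}}\matr{U}_k) = \text{rank}(\matr{M}_{\set{S}\setminus\{v\}}\matr{U}_k)$ (call this common rank $r$) and $\Delta_{1} = -1$ meaning the rank drops from $r$ to $r-1$. So the biconditional reduces to two implications, one for each case, and both amount to comparing the two sums in Lemma~\ref{lemma:simplify_xi_2}.

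In the equal-rank case, interlacing gives $\sigma_{i}(\matr{M}_{\set{S}}\matr{U}_{k}) \geq \sigma_{i}(\matr{M}_{\set{S}\setminus\{v\}}\matr{U}_{k}) > 0$ for $i=1,\dots,r$, so inverting preserves the inequality and summing yields $\xi_{2}(\set{S}) \leq \xi_{2}(\set{S}\setminus\{v\})$, i.e.\ $\Delta_{2} \leq 0$. In the rank-drop case, the relevant half of the interlacing gives $\sigma_{i}(\matr{M}_{\set{S}\setminus\{v\}}\matr{U}_{k}) \geq \sigma_{i+1}(\matr{M}_{\set{S}}\matr{U}_{k}) > 0$ for $i=1,\dots,r-1$, so
\begin{equation}
\xi_{2}(\set{S}\setminus\{v\}) \leq \sum_{i=2}^{r} \frac{1}{\sigma_{i}(\matr{M}_{\set{S}}\matr{U}_{k})^{2}} = \xi_{2}(\set{S}) - \frac{1}{\sigma_{1}(\matr{M}_{\set{S}}\matr{U}_{k})^{2}}, \nonumber
\end{equation}
which is strictly less than $\xi_{2}(\set{S})$, giving $\Delta_{2} > 0$. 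The two cases partition $\Delta_{1} \in \{0,-1\}$ into exactly $\Delta_{2}\leq 0$ and $\Delta_{2}>0$, yielding the claimed equivalence.

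The only real subtlety is invoking singular-value interlacing for row deletion cleanly; this is standard (it follows from Cauchy interlacing applied to the Gram matrix, or equivalently from Thompson's theorem) and should be cited rather than reproved. The other potential snag is handling the edge case $|\set{S}| \leq k$, where $\matr{M}_{\set{S}}\matr{U}_{k}$ has more columns than rows; but the interlacing statement holds verbatim for rectangular matrices once we treat singular values beyond the matrix dimensions as zero, so nothing needs to change. Everything else is routine bookkeeping with Lemmas~\ref{lemma:simplify_xi_2} and~\ref{lemma:delta_1_ls_is_discrete}.
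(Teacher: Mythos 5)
Your proof is correct and takes essentially the same route as the paper: the paper applies Cauchy's interlacing theorem to the Gram matrices $(\matr{M}_{\set{S}}\matr{U}_{k})(\matr{M}_{\set{S}}\matr{U}_{k})^{T}$ (of which the reduced one is a principal submatrix), which is exactly the singular-value interlacing for row deletion you invoke, and the case split on $\Delta_{1}\in\{0,-1\}$ with term-by-term comparison of the reciprocal sums is identical. The only difference is cosmetic --- you phrase the interlacing in terms of singular values of the rectangular matrix rather than eigenvalues of the Gram matrix.
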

\begin{proof}
    Note that $(\matr{M}_{\set{S} \backslash \{v\}}\matr{U}_{k})(\matr{M}_{\set{S} \backslash \{v\}}\matr{U}_{k})^{T}$ is a principal submatrix of $(\matr{M}_{\set{S}}\matr{U}_{k})(\matr{M}_{\set{S}}\matr{U}_{k})^{T}$. Write the eigenvalues of $(\matr{M}_{\set{S} \backslash \{v\}}\matr{U}_{k})(\matr{M}_{\set{S} \backslash \{v\}}\matr{U}_{k})^{T}$ as $\lambda_{1}, \dots, \lambda_{n}$ and the eigenvalues of $(\matr{M}_{\set{S}}\matr{U}_{k})(\matr{M}_{\set{S}}\matr{U}_{k})^{T}$ as $\mu_{1}, \dots \mu_{n+1}$. Then by Cauchy's Interlacing Theorem \cite[p.~59]{bhatia2013matrix},
     \begin{equation}
         0 \leq \mu_{1} \leq \lambda_{1} \leq \dots \leq \lambda_{n} \leq \mu_{n+1} \leq 1 \label{eq:MUUM_nonnegative}
     \end{equation} 
    where the outer bounds come from the fact that both matrices are principal submatrices of $\matr{U}_{k}\matr{U}_{k}^{T}$, an orthogonal projection matrix.
    
    \subsubsection{$\Delta_{1}= 0 \implies \Delta_{2} \leq 0$}
    $\Delta_{1} = 0$ implies the rank of $\matr{M}_{\set{S}}\matr{U}_{k}$ does not change with the removal of $v$, so neither does the rank of $(\matr{M}_{\set{S}}\matr{U}_{k})(\matr{M}_{\set{S}}\matr{U}_{k})^{T}$. As the rank is unchanged, $(\matr{M}_{\set{S}}\matr{U}_{k})(\matr{M}_{\set{S}}\matr{U}_{k})^{T}$ has one more zero-eigenvalue than $(\matr{M}_{\set{S} \backslash \{v\}}\matr{U}_{k})(\matr{M}_{\set{S} \backslash \{v\}}\matr{U}_{k})^{T}$. This means:
    \begin{align}
        \mu_{1} = 0 \label{eq:mu_1_is_zero}\\
        \lambda_{i} = 0 \iff \mu_{i+1} = 0 \label{eq:mu_lambda_biject_zeros}
    \end{align} 
    By Cauchy's Interlacing Theorem, $\lambda_{i} \leq \mu_{i+1}$ and so 
    \begin{equation}
        \frac{1}{\lambda_{i}} \geq \frac{1}{\mu_{i+1}} \text{ if } \lambda_{i} \neq 0 \text{ and } \mu_{i+1} \neq 0.  \label{eq:lambda_mu_ineq}
    \end{equation} Therefore
 \begin{equation}
     \sum_{\lambda_{i}^{\set{S}} \neq 0} { \frac{1}{\lambda_{i}^{\set{S}}}} \geq \sum_{\mu_{i}^{\set{S}} \neq 0} { \frac{1}{\mu_{i}^{\set{S}}}} \label{eq:eig_sum_ineq}
 \end{equation}
 as we have the same number of non-zero terms in each of these terms by (\ref{eq:mu_1_is_zero}) and (\ref{eq:mu_lambda_biject_zeros}), and the inequality is proved by summing over the non-zero terms using (\ref{eq:lambda_mu_ineq}).
Equation (\ref{eq:eig_sum_ineq}) is exactly 
\begin{equation}
   \xi_{2}(\set{S} \backslash \{v\}) \geq \xi_{2}(\set{S}).
\end{equation}
Rearranging gives $\Delta_{2} \leq 0$.

\subsubsection{$\Delta_{1} = 0 \impliedby \Delta_{2} \leq 0$} We prove the equivalent statement
\begin{equation}
    \Delta_{1} \neq 0 \implies \Delta_{2} > 0 .
\end{equation}
By Lemma \ref{lemma:delta_1_ls_is_discrete}, if $\Delta_{1} \neq 0$ then $ \Delta_{1} = -1$. This means that the rank of $\matr{M}_{\set{S}}\matr{U}_{k}$ is reduced by 1 by the removal of $v$, therefore $(\matr{M}_{\set{S}}\matr{U}_{k})(\matr{M}_{\set{S}}\matr{U}_{k})^{T}$ has one more non-zero eigenvalue than $(\matr{M}_{\set{S} \backslash \{v\}}\matr{U}_{k})(\matr{M}_{\set{S} \backslash \{v\}}\matr{U}_{k})^{T}$. This means:
\begin{align}
    \mu_{n+1} > 0 \label{eq:mu_nonzero} \\
    \lambda_{i} \neq 0 \iff \mu_{i} \neq 0 \label{eq:lambda_mu_match2}
\end{align}
By Cauchy's interlacing theorem, $\lambda_i \geq \mu_i$ and so
\begin{equation}
        \frac{1}{\lambda_{i}} \leq \frac{1}{\mu_{i}} \text{ if } \lambda_{i} \neq 0 \text{ and } \mu_{i} \neq 0.  \label{eq:lambda_mu_inv_ineq2}
    \end{equation}
    Let $I$ be the number of zero eigenvalues of $(\matr{M}_{\set{S}}\matr{U}_{k})(\matr{M}_{\set{S}}\matr{U}_{k})^T$. Then 
\begin{equation}
    \sum_{I\leq i \leq n} { \frac{1}{\lambda_{i}^{\set{S}}}} \leq \sum_{I\leq i \leq n} { \frac{1}{\mu_{i}^{\set{S}}}} <  \sum_{I \leq i \leq n+1} { \frac{1}{\mu_{i}^{\set{S}}}}. \label{eq:eig_sum_ineq2_intermediate}
\end{equation}
With the left inequality by matching terms via (\ref{eq:lambda_mu_match2}) and then summing over (\ref{eq:lambda_mu_inv_ineq2}), and the right inequality because (\ref{eq:mu_nonzero}) means $\frac{1}{\mu_{n+1}} > 0$. We then note the left and the right terms in this equality say:
\begin{equation}
     \sum_{\lambda_{i}^{\set{S}} \neq 0} { \frac{1}{\lambda_{i}^{\set{S}}}} < \sum_{\mu_{i}^{\set{S}} \neq 0} { \frac{1}{\mu_{i}^{\set{S}}}} \label{eq:eig_sum_ineq2}
 \end{equation}
or equivalently,
\begin{equation}
 \xi_{2}(\set{S} \backslash \{v\}) < \xi_{2}(\set{S}).   
\end{equation}
Rearranging gives $\Delta_{2} > 0$.
\end{proof}

We finally have the following:
\begin{lemma}    \label{lemma:LS_delta_1_improvement_means_delta_2_worse}
    For LS, $\Delta_{1} < 0 \iff \Delta_{2} > 0$.
\end{lemma}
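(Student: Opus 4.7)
The plan is to observe that this lemma is an immediate consequence of the two preceding lemmas in this appendix: Lemma \ref{lemma:delta_1_ls_is_discrete}, which restricts $\Delta_1$ to the set $\{0, -1\}$, and Lemma \ref{lemma:LS_delta_1_is_0_improves_stability}, which equates $\Delta_1 = 0$ with $\Delta_2 \leq 0$. Together these collapse the claim into a one-line argument, so there is no real obstacle, only bookkeeping.

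The first step is to note that, under LS, $\Delta_1 < 0$ is equivalent to $\Delta_1 \neq 0$: since by Lemma \ref{lemma:delta_1_ls_is_discrete} the only two possible values for $\Delta_1$ are $0$ and $-1$, ruling out $0$ is the same as asserting that the value equals $-1$, which is strictly negative.

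The second step is to invoke the contrapositive of Lemma \ref{lemma:LS_delta_1_is_0_improves_stability}. That lemma states $\Delta_1 = 0 \iff \Delta_2 \leq 0$; negating both sides gives $\Delta_1 \neq 0 \iff \Delta_2 > 0$. Chaining this equivalence with the one from the first step yields $\Delta_1 < 0 \iff \Delta_2 > 0$, which is exactly the claim. The main ``work'' (the Cauchy interlacing argument splitting into the rank-preserving and rank-dropping cases) has already been done in proving Lemma \ref{lemma:LS_delta_1_is_0_improves_stability}; here we are merely repackaging it using the discreteness of $\Delta_1$ from Lemma \ref{lemma:delta_1_ls_is_discrete}.
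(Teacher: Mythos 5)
Your proposal is correct and matches the paper's proof exactly: the paper also derives the lemma directly from Lemma \ref{lemma:delta_1_ls_is_discrete} (discreteness of $\Delta_1$) and Lemma \ref{lemma:LS_delta_1_is_0_improves_stability} (the $\Delta_1 = 0 \iff \Delta_2 \leq 0$ equivalence), with your write-up merely spelling out the contrapositive step the paper leaves implicit.
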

\begin{proof}
    By 
    Lemma \ref{lemma:delta_1_ls_is_discrete} and Lemma \ref{lemma:LS_delta_1_is_0_improves_stability}.
\end{proof}

\section{Proof of Theorem \ref{main_ls}}
\label{proof_appendix_LS}
\begin{proof}
For brevity, we fix $\set{S}$ and $v$ and write $\Delta_{1} = \Delta_{1}(\set{S},v)$ and $\Delta_{2} = \Delta_{2}(\set{S},v) $.

Rearranging (\ref{eq:EMSE_decomp_into_delta}) gives us that $v$ improves $\set{S}$ if and only if 
\begin{equation}
    \Delta_{1} + \sigma^{2} \cdot \Delta_{2} > 0
\end{equation}
or equivalently if and only if 
\begin{equation}
    \Delta_{1} > - \sigma^{2} \cdot \Delta_{2} .
\end{equation}
By definition, $\sigma^{2} = \frac{k}{N \cdot \text{SNR}}$, so this condition is equivalent to
\begin{equation}
    \Delta_{1} > - \frac{k}{N \cdot \text{SNR}} \Delta_{2}
\end{equation}
and as SNR is strictly positive, this is equivalent to
\begin{equation}
    \text{SNR}\cdot \Delta_{1} > - \frac{k}{N} \Delta_{2}. \label{eq:thm_proof_linear_cond}
\end{equation}

We can now use the major lemmas from the previous appendices. By Lemma \ref{lemma:delta_1_ls_is_discrete}, we have two possible values of $\Delta_{1}(\set{S},v)$:
\subsection*{$\Delta_{1} = 0$:}
Lemma \ref{lemma:LS_delta_1_is_0_improves_stability} means $\Delta_{2} < 0$, so
\begin{equation}
    \Delta_{1} + \sigma^{2} \cdot \Delta_{2} = \sigma^{2} \cdot \Delta_{2} < 0
\end{equation}
and so $v$ does not improve $\set{S}$.

\subsection*{$\Delta_{1} = -1$:}
Eq. (\ref{eq:thm_proof_linear_cond}) simplifies to:
\begin{equation}
    - \text{SNR} > - \frac{k}{N} \Delta_{2}
\end{equation}
which is equivalent to
\begin{equation}
    \text{SNR} < \frac{k}{N} \Delta_{2}. \label{eq:delta_is_minus_1_cond_SNR}
\end{equation}

On the one hand, $v$ improves $\set{S}$ implies $\Delta_{1} = -1$, which implies (\ref{eq:delta_is_minus_1_cond_SNR}). On the other hand, (\ref{eq:delta_is_minus_1_cond_SNR}) implies $\Delta_{2} > 0$ which in turn implies $\Delta_{1}=-1$, which means (\ref{eq:delta_is_minus_1_cond_SNR}) implies (\ref{eq:thm_proof_linear_cond}), which implies $v$ improves $\set{S}$. 

Note that the right-hand side of (\ref{eq:delta_is_minus_1_cond_SNR}) is $\tau(\set{S},v)$; this completes the proof.
\end{proof}

\section{Proof of Theorem \ref{thm:main_existence_LS}}
We restate the theorem:
\label{app:LS_satisfied}
\begin{theorem}
    \label{thm:app:main_existence_LS}
    Consider any sequence of vertices $v_{1},\dots, v_{N}$ with no repeated vertices, and let $\set{S}_{i} = \{v_{1} , \dots , v_{i} \}$. Then there are exactly $k$ indices $I_{1}, \dots, I_{k}$ such that under LS reconstruction of a noisy $k$-bandlimited signal,
    \begin{equation}
        \forall 1\leq j \leq k: \tau(\set{S}_{I_{j}}, v_{I_{j}}) > 0
    \end{equation}
    and so for some $\text{SNR}>0$ removing $v_{I_{j}}$ would improve $\set{S}_{I_{j}}$.
\end{theorem}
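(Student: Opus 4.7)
The plan is to reduce the desired count to a simple tally of rank-increase steps in the sequence $\text{rank}(\matr{M}_{\set{S}_i}\matr{U}_k)$, and then to observe that this rank goes from $0$ to exactly $k$ in unit-or-zero increments.

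First I would collapse the condition $\tau(\set{S}_i,v_i) > 0$ to a statement purely about $\Delta_1$. Because $k, N > 0$, we have $\tau(\set{S}_i,v_i) > 0 \iff \Delta_2(\set{S}_i,v_i) > 0$. Applying Lemma \ref{lemma:LS_delta_1_improvement_means_delta_2_worse} gives the equivalence $\Delta_2(\set{S}_i,v_i) > 0 \iff \Delta_1(\set{S}_i,v_i) < 0$, and Lemma \ref{lemma:delta_1_ls_is_discrete} then forces $\Delta_1(\set{S}_i,v_i) = -1$. So an index $i$ is one of the $I_j$'s if and only if $\Delta_1(\set{S}_i,v_i) = -1$.

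Next I would rewrite $\Delta_1$ in terms of rank using Lemma \ref{lemma:LS_xi_1_is_rank}, which gives
\begin{equation}
\Delta_1(\set{S}_i,v_i) = \xi_1(\set{S}_i) - \xi_1(\set{S}_{i-1}) = \text{rank}(\matr{M}_{\set{S}_{i-1}}\matr{U}_k) - \text{rank}(\matr{M}_{\set{S}_i}\matr{U}_k).
\end{equation}
Since $\matr{M}_{\set{S}_i}\matr{U}_k$ is obtained from $\matr{M}_{\set{S}_{i-1}}\matr{U}_k$ by appending exactly one row, the rank can increase by $0$ or $1$, so the difference above is in $\{0, -1\}$, and equals $-1$ precisely when that row augments the row space.

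Finally I would close the counting argument with the telescoping identity
\begin{equation}
\sum_{i=1}^{N} \bigl(\text{rank}(\matr{M}_{\set{S}_i}\matr{U}_k) - \text{rank}(\matr{M}_{\set{S}_{i-1}}\matr{U}_k)\bigr) = \text{rank}(\matr{M}_{\set{S}_N}\matr{U}_k) - \text{rank}(\matr{M}_{\emptyset}\matr{U}_k).
\end{equation}
The right-hand side equals $k$: the empty sample gives a $0 \times k$ matrix of rank $0$, while $\set{S}_N$ contains every vertex (the $v_i$ are distinct and there are $N$ of them), so $\matr{M}_{\set{S}_N}$ is a permutation matrix and $\text{rank}(\matr{M}_{\set{S}_N}\matr{U}_k) = \text{rank}(\matr{U}_k) = k$. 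Since each summand is $0$ or $1$, exactly $k$ of the $N$ indices contribute a $1$; these are the desired $I_1,\dots,I_k$. The main obstacle is essentially just the bookkeeping at the boundary $i=1$ (checking that $\xi_1(\emptyset) = k$ and the rank of the $0 \times k$ matrix is $0$), which is routine.
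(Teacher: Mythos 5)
Your proposal is correct and follows essentially the same route as the paper's proof: the same lemmas ($\xi_1(\set{S}) = k - \text{rank}(\matr{M}_{\set{S}}\matr{U}_k)$, $\Delta_1 \in \{0,-1\}$, and the sign equivalence between $\Delta_1$ and $\Delta_2$) combined with the observation that the rank telescopes from $0$ to $k$ over the full vertex sequence. The only difference is presentational --- you establish the chain $\tau > 0 \iff \Delta_1 = -1$ first and count afterwards, whereas the paper counts the $\Delta_1 = -1$ steps first and then converts to $\tau$ --- which does not change the substance.
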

\begin{proof}
\label{app:proof_main_existence_LS}
By Appendix \ref{proof_appendix_LS}, Lemma \ref{lemma:LS_xi_1_is_rank}: 
\begin{equation}
    \xi_{1}(\set{S}_{i}) = k -\text{rank}(\matr{M}_{\set{S}_{i}}\matr{U}_{k}).
\end{equation}
By Appendix \ref{proof_appendix_LS}, Lemma \ref{lemma:delta_1_ls_is_discrete}, $\Delta_{1} \in \{0, -1\}$ and as $\text{rank}(\matr{U}_{k}) = k$, $\xi_{1}(\set{S}_{N}) = 0$. As $\xi_{1}(\set{S}_{0}) = k$, we must have exactly $k$ indices for which $\Delta_{1}(\set{S}_{i}, v_{i}) = -1$, and by Appendix \ref{proof_appendix_LS}, Lemma \ref{lemma:LS_delta_1_is_0_improves_stability} we have exactly $k$ indices for which $\Delta_{2}(\set{S}_{i}, v_{i}) > 0$. As $\tau (\set{S}_{i}, v_{i})= \frac{k}{N} \Delta_{2}(\set{S}_{i}, v_{i})$, we're done.
\end{proof}

\section{Proof of Theorem \ref{thm:noiseless_optimality_means_noise_sensitivity}}
\label{app:optimal_noiseless_schemes_immediately_satisfy}

\begin{proof}

By Appendix \ref{proof_appendix_LS}, Lemma \ref{lemma:LS_xi_1_is_rank}, the noiseless error
\begin{equation}
    \xi_{1}(\set{S}) = k -\text{rank}(\matr{M}_{\set{S}}\matr{U}_{k})
\end{equation}
must be 0, as we can perfectly reconstruct any $k$-bandlimited signal. Therefore, $\text{rank}(\matr{M}_{\set{S}}\matr{U}_{k}) = k$.

$\matr{M}_{\set{S}}\matr{U}_{k}$ is a $k \times k$ matrix of full rank, so its rows must be linearly independent. 
Any subset of linearly independent rows is linearly independent, so for any non-empty $\set{R} \subset \set{S}$, $\matr{M}_{\set{R}}\matr{U}_{k}$ has linearly independent rows.

Greedy schemes pick increasing sample sets: that is, if asked to pick a vertex sample set $\set{S}_{m}$ of size $m$ for $m < k$ and a sample set $\set{S}$ of size $k$, $\set{S}_{m} \subset \set{S}$. Therefore for any sample set $\set{S}_{m}$ of size $m \leq k$ picked by the scheme, $\matr{M}_{\set{S}_{m}}\matr{U}_{k}$ has independent rows.

If $\matr{M}_{\set{S}_{m}}\matr{U}_{k}$ has independent rows, then removal of any row (corresponding to removing any vertex) reduces its rank by 1; that is,
\begin{equation}
    \forall m \leq k: \enskip \forall v \in \set{S}_{m}: \enskip \Delta_{1}(\set{S}_{m},v) = -1
\end{equation}
Then, by Appendix \ref{proof_appendix_LS}, Lemma \ref{lemma:LS_delta_1_improvement_means_delta_2_worse},
\begin{equation}
    \forall m \leq k: \enskip \forall v \in \set{S}_{m}: \enskip \Delta_{2}(\set{S}_{m},v) > 0 
\end{equation}
and as $\tau(\set{S}_{m},v) = \frac{k}{N}\Delta_{2}(\set{S}_{m},v)$ and $\frac{k}{N} > 0$,
\begin{equation}
     \forall m \leq k: \enskip \forall v \in \set{S}_{m}: \enskip \tau(\set{S}_{m},v) > 0. 
\end{equation}
This proves (\ref{eq:greedy_sampling_first_k}). 

As $\matr{M}_{\set{S}_{k}}\matr{U}_{k}$ has $k$ independent rows, it is of rank k. Adding further rows can't decrease its rank, so for $m' > k$, $\textrm{rank}(\matr{M}_{\set{S}_{m'}}\matr{U}_{k}) \geq k$. As $\matr{U}_{k}$ is of rank $k$, $\textrm{rank}(\matr{M}_{\set{S}_{m'}}\matr{U}_{k}) \leq k$. This means for all samples sizes $m' > k$, $\textrm{rank}(\matr{M}_{\set{S}_{m'}}\matr{U}_{k}) = k$. This says that further additions of rows do not change rank; that is:
\begin{equation} 
   \forall m' > k: \enskip \forall v \in \set{S}_{m'} \backslash \set{S}_{k}: \enskip \Delta_{1}(\set{S}_{m'},v) = 0
\end{equation}
Then, by Appendix \ref{proof_appendix_LS}, Lemma \ref{lemma:LS_delta_1_is_0_improves_stability},
\begin{equation}
    \forall m' > k: \enskip \forall v \in \set{S}_{m'} \backslash \set{S}_{k}: \enskip \Delta_{2}(\set{S}_{m'},v) \leq 0 
\end{equation}
and, like for (\ref{eq:greedy_sampling_first_k}, as $\tau(\set{S}_{m},v) = \frac{k}{N}\Delta_{2}(\set{S}_{m},v)$ and $\frac{k}{N} > 0$,
\begin{equation}
     \forall m' > k: \enskip \forall v \in \set{S}_{m'}\backslash \set{S}_{k} : \enskip \tau(\set{S}_{m'},v) \leq 0. 
\end{equation}
This proves (\ref{eq:greedy_sampling_over_k}). 
\end{proof}

\section{Proof of Remark \ref{remark:ADE_are_noiseless_optimal}}
\label{app:proof_of_remark_ADE_are_noiseless_optimal}

\subsection*{A-Optimality}
A-optimality depends on the existence of the inverse of $(\matr{M}_{\set{S}}\matr{U}_{k})(\matr{M}_{\set{S}}\matr{U}_{k})^{T}$ existing, which requires it to be of full rank. By Appendix \ref{proof_appendix_LS}, Lemma \ref{lemma:square_to_rect_rank}, if an A-optimal scheme picks a set $\set{S}$ of size $k$, then $\text{rank}(\matr{M}_{\set{S}}\matr{U}_{k}) = k$. Therefore, $\set{S}$ is a uniqueness set \cite{{anis2016efficient}} and can perfectly reconstruct any $k$-bandlimited signal.

\subsection*{D- and E-optimality}
We show that for sample sizes less than $k$ we can always pick a row which keeps $(\matr{M}_{\set{S}}\matr{U}_{k})(\matr{M}_{\set{S}}\matr{U}_{k})^{T}$ full rank (of rank $|\set{S}|$), and that D- and E-optimal schemes do so.

By Appendix \ref{proof_appendix_LS}, Lemma \ref{lemma:square_to_rect_rank}, $\text{rank}(\matr{M}_{\set{S}}\matr{U}_{k})(\matr{M}_{\set{S}}\matr{U}_{k})^{T} = \text{rank}(\matr{M}_{\set{S}}\matr{U}_{k})$, so we only need to ensure $\text{rank}(\matr{M}_{\set{S}}\matr{U}_{k}) = |\set{S}|$.

We proceed by induction: given $\set{S}_{1}$ with $|\set{S}_{1}| = 1$, $\text{rank}(\matr{M}_{\set{S}_{1}}\matr{U}_{k}) = 1$. Assume that for $\set{S}_{i}$ with $|\set{S}_{i}| = i < k$, $\text{rank}(\matr{M}_{\set{S}_{i}}\matr{U}_{k}) = i$. As $\text{rank}(\matr{U}_{k}) = k$ and $i < k$, we can find a row to add to $\matr{M}_{\set{S}_{i}}\matr{U}_{k}$ which will increase its rank (else all other rows would lie in the $i$-dimensional space spanned by the rows of $\matr{M}_{\set{S}_{i}}\matr{U}_{k}$, which would imply $\text{rank}(\matr{U}_{k}) = i$, which is a contradiction as $i < k$). Adding the vertex which corresponds to the row to $\set{S}_{i}$ gives $\set{S}_{i+1}$ with $\text{rank}(\matr{M}_{\set{S}_{i+1}}\matr{U}_{k}) = i+1$.

We have shown that we can greedily choose to keep $\text{rank}(\matr{M}_{\set{S}}\matr{U}_{k}) = |\set{S}|$. We now show that D- and E-optimal schemes do so. The eigenvalues of $(\matr{M}_{\set{S}}\matr{U}_{k})(\matr{M}_{\set{S}}\matr{U}_{k})^{T}$ are non-negative (see Appendix \ref{proof_appendix_LS}, Eq. (\ref{eq:MUUM_nonnegative})), so any invertible $(\matr{M}_{\set{S}}\matr{U}_{k})(\matr{M}_{\set{S}}\matr{U}_{k})^{T}$ will have a strictly positive determinant and minimum eigenvalue, which are preferable under the D- and E- optimality criterion respectively to a non-invertible $(\matr{M}_{\set{S}}\matr{U}_{k})(\matr{M}_{\set{S}}\matr{U}_{k})^{T}$, which has a determinant and minimum eigenvalue of 0. Therefore, greedy D- and E- optimal sampling schemes will make sure $(\matr{M}_{\set{S}}\matr{U}_{k})(\matr{M}_{\set{S}}\matr{U}_{k})^{T}$ is invertible, and thus keep $\text{rank}(\matr{M}_{\set{S}}\matr{U}_{k}) = |\set{S}|$ for $|\set{S}| \leq k$. Therefore when D- and E- optimal schemes pick a set $\set{S}$ of size $k$, $\text{rank}(\matr{M}_{\set{S}}\matr{U}_{k}) = k$. Therefore, $\set{S}$ is a uniqueness set \cite{{anis2016efficient}} and can perfectly reconstruct any $k$-bandlimited signal.

\section{Additional Results}
\label{plot_appendix}
We show thresholds for the ER, BA and SBM graphs with 100 vertices (Fig. \ref{fig:LS_SNR_Threshold_plots_small}). We also present MSE plots for the larger BA (Fig \ref{LS_BA_MSE_fig}) and SBM (Fig \ref{LS_SBM_MSE_fig}) graphs.

\begin{figure*}%
    \centering
    \begin{subfigure}{0.6\columnwidth}
    \includegraphics[width=\columnwidth]{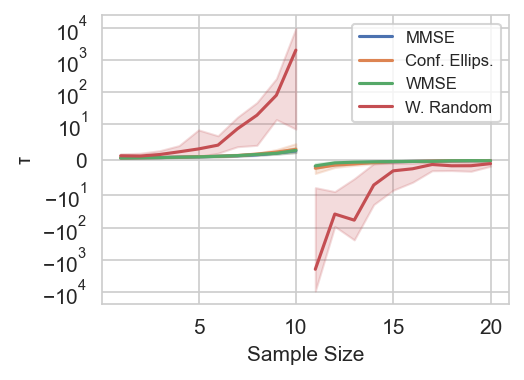}
    \caption{Erdos-Renyi}
    \label{snr_ER_small}
    \end{subfigure}\hfill
    \begin{subfigure}{0.6\columnwidth}
    \includegraphics[width=\columnwidth]{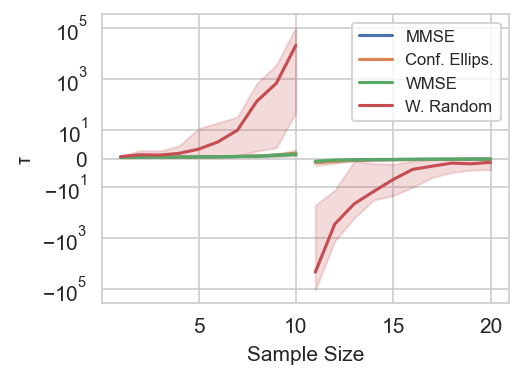}%
    \caption{Barabasi-Albert}%
    \label{snr_BA_small}%
    \end{subfigure}\hfill%
    \begin{subfigure}{0.6\columnwidth}
    \includegraphics[width=\columnwidth]{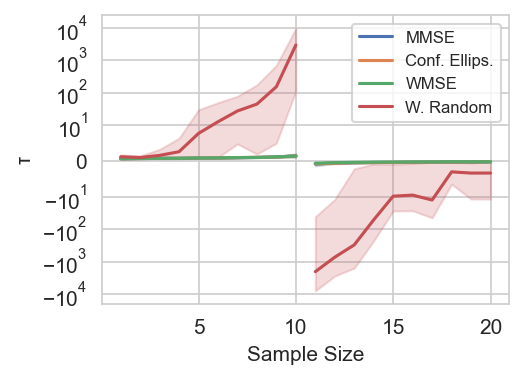}%
    \caption{SBM}%
    \label{snr_SBM_small}%
    \end{subfigure}%
    \caption{$\tau$ for different random graph models under LS reconstruction (\#vertices = 100, bandwidth = 10)}
\label{fig:LS_SNR_Threshold_plots_small}
\end{figure*}

\begin{figure*}%
    \label{LS_BA_MSE_fig}
    \centering
    \begin{subfigure}{0.6\columnwidth}
    \includegraphics[width=\columnwidth]{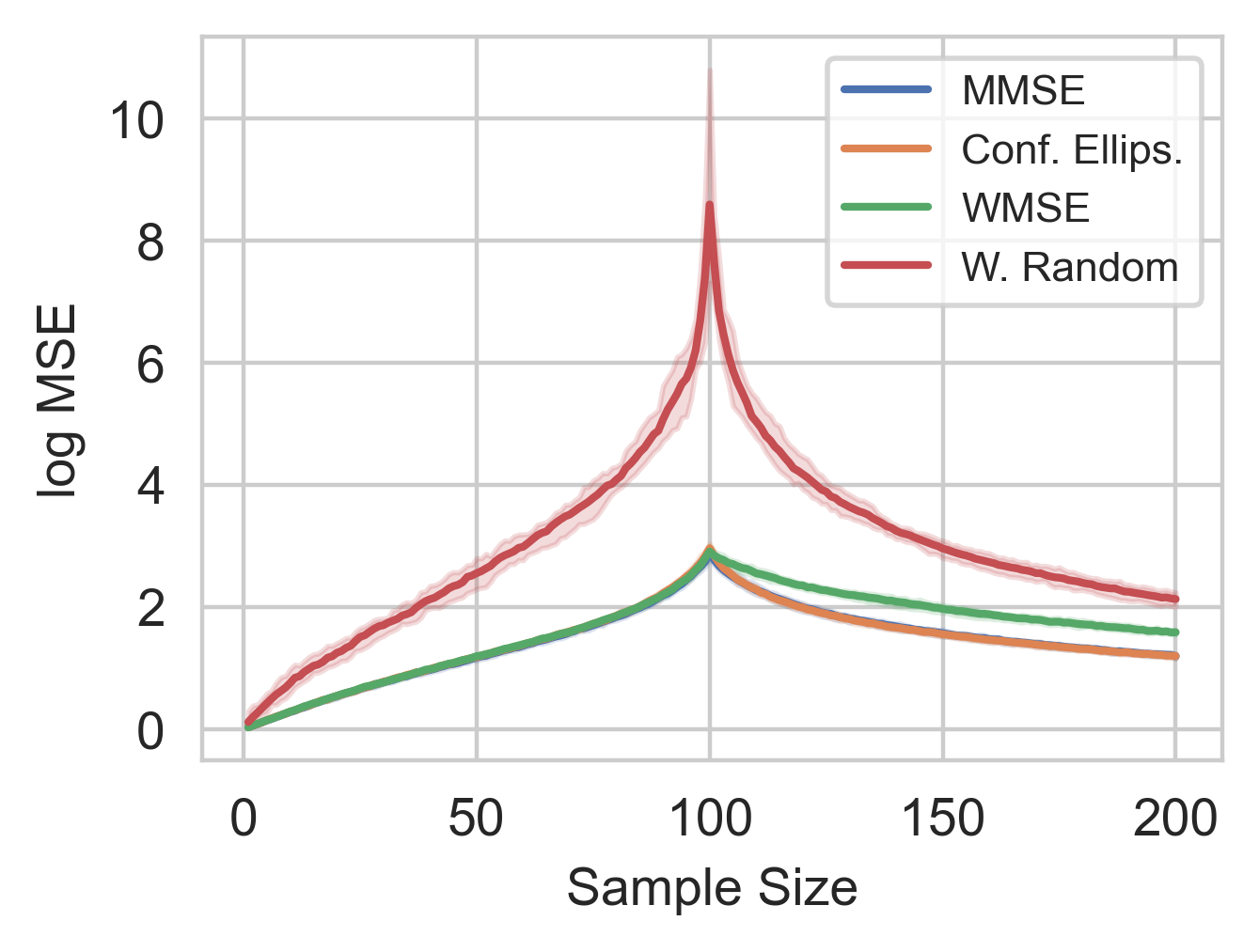}
    \caption{SNR = $10^{-1}$}
    \label{BA_MSE_subfiga}
    \end{subfigure}\hfill
    \begin{subfigure}{0.6\columnwidth}
    \includegraphics[width=\columnwidth]{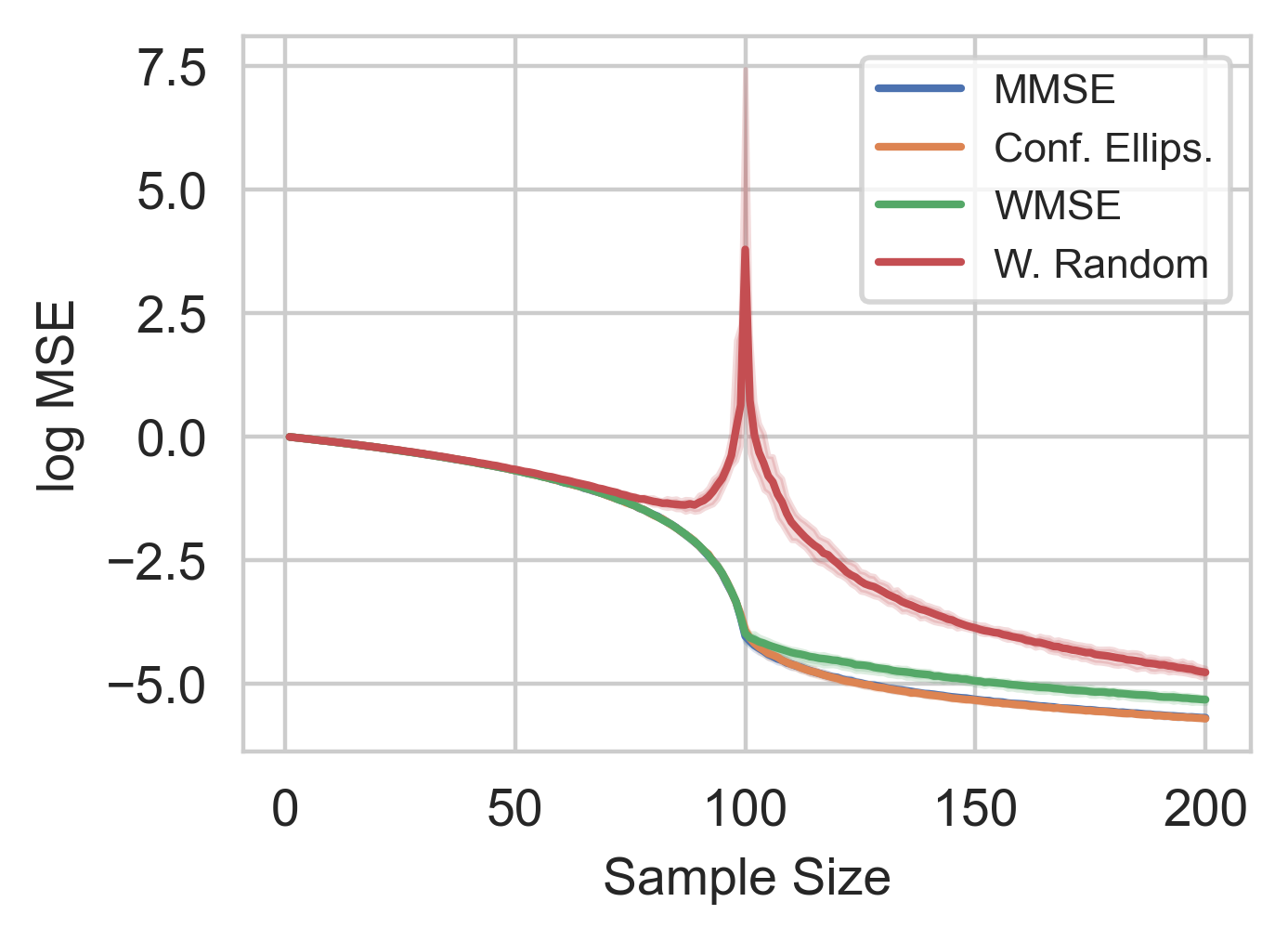}%
    \caption{SNR = $10^{2}$}%
    \label{BA_MSE_subfigb}%
    \end{subfigure}\hfill%
    \begin{subfigure}{0.6\columnwidth}
    \includegraphics[width=\columnwidth]{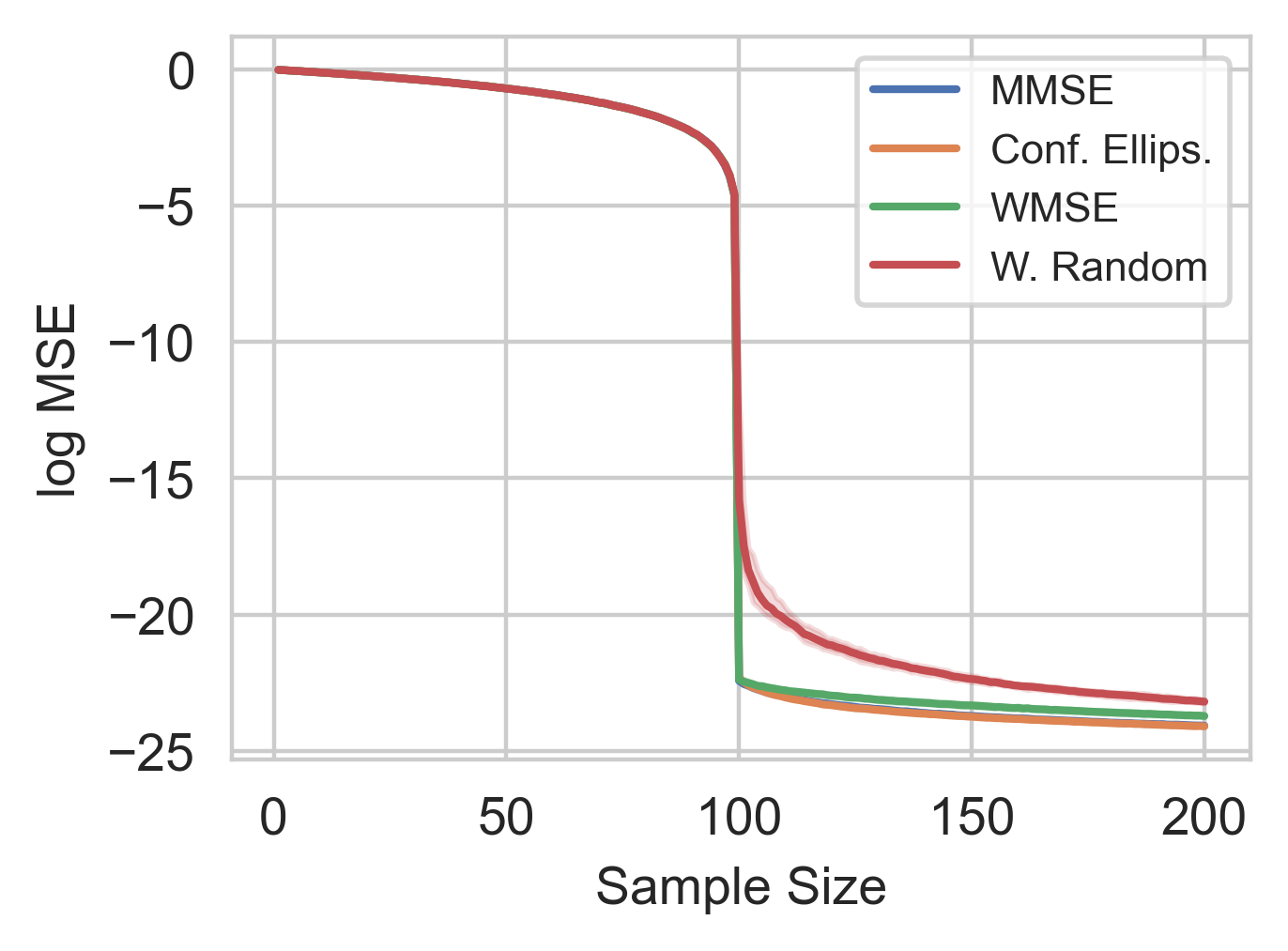}%
    \caption{SNR = $10^{10}$}%
    \label{BA_MSE_subfigc}%
    \end{subfigure}%
    \caption{Average MSE for LS reconstruction on BA Graphs (\#vertices=1000, bandwidth = 100) with different SNRs}
\label{LS_BA_MSE_fig}
\end{figure*}

\begin{figure*}%
    \label{LS_SBM_MSE_fig}
    \centering
    \begin{subfigure}{0.6\columnwidth}
    \includegraphics[width=\columnwidth]{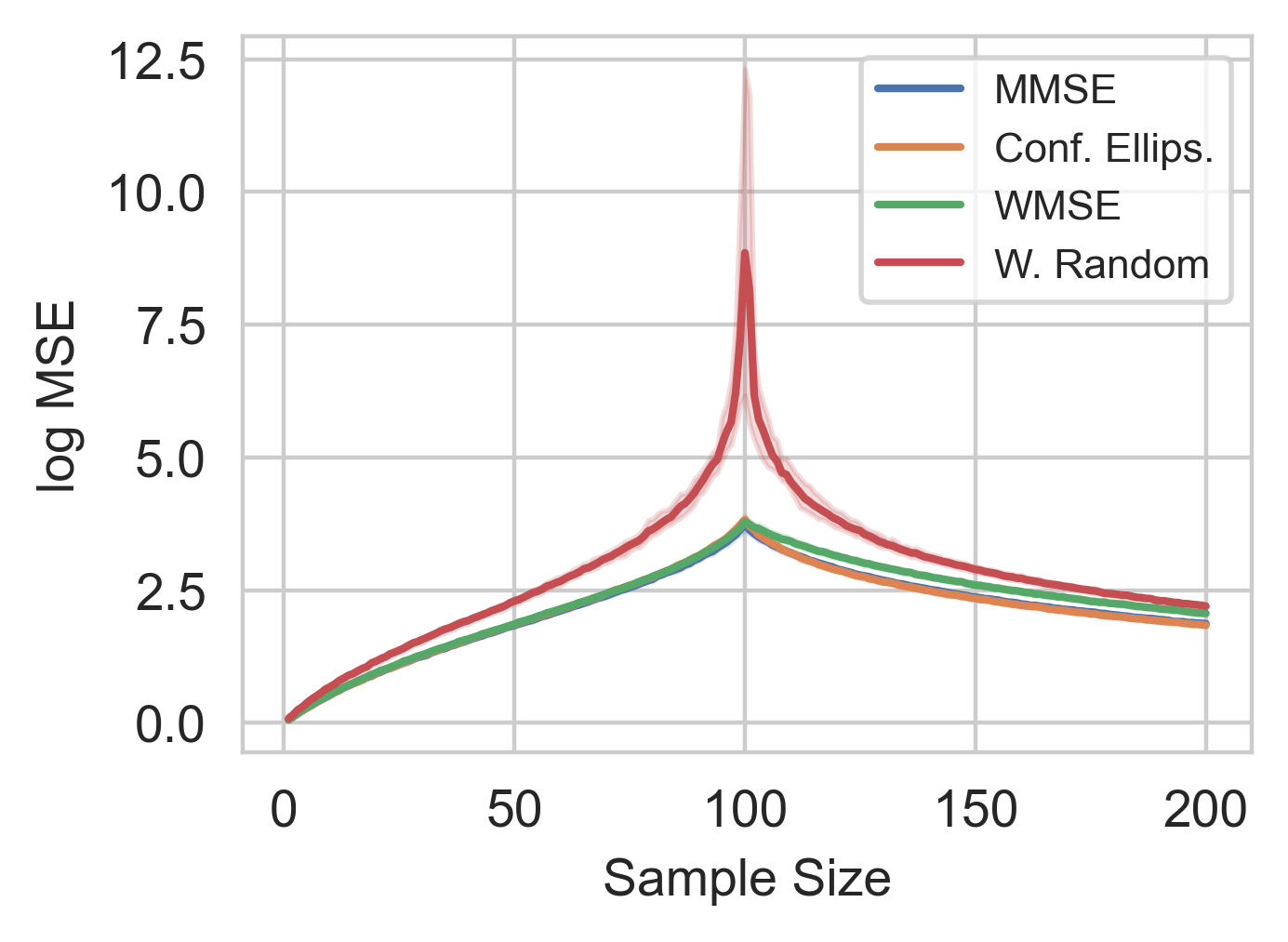}
    \caption{SNR = $10^{-1}$}
    \label{SBM_MSE_subfiga}
    \end{subfigure}\hfill
    \begin{subfigure}{0.6\columnwidth}
    \includegraphics[width=\columnwidth]{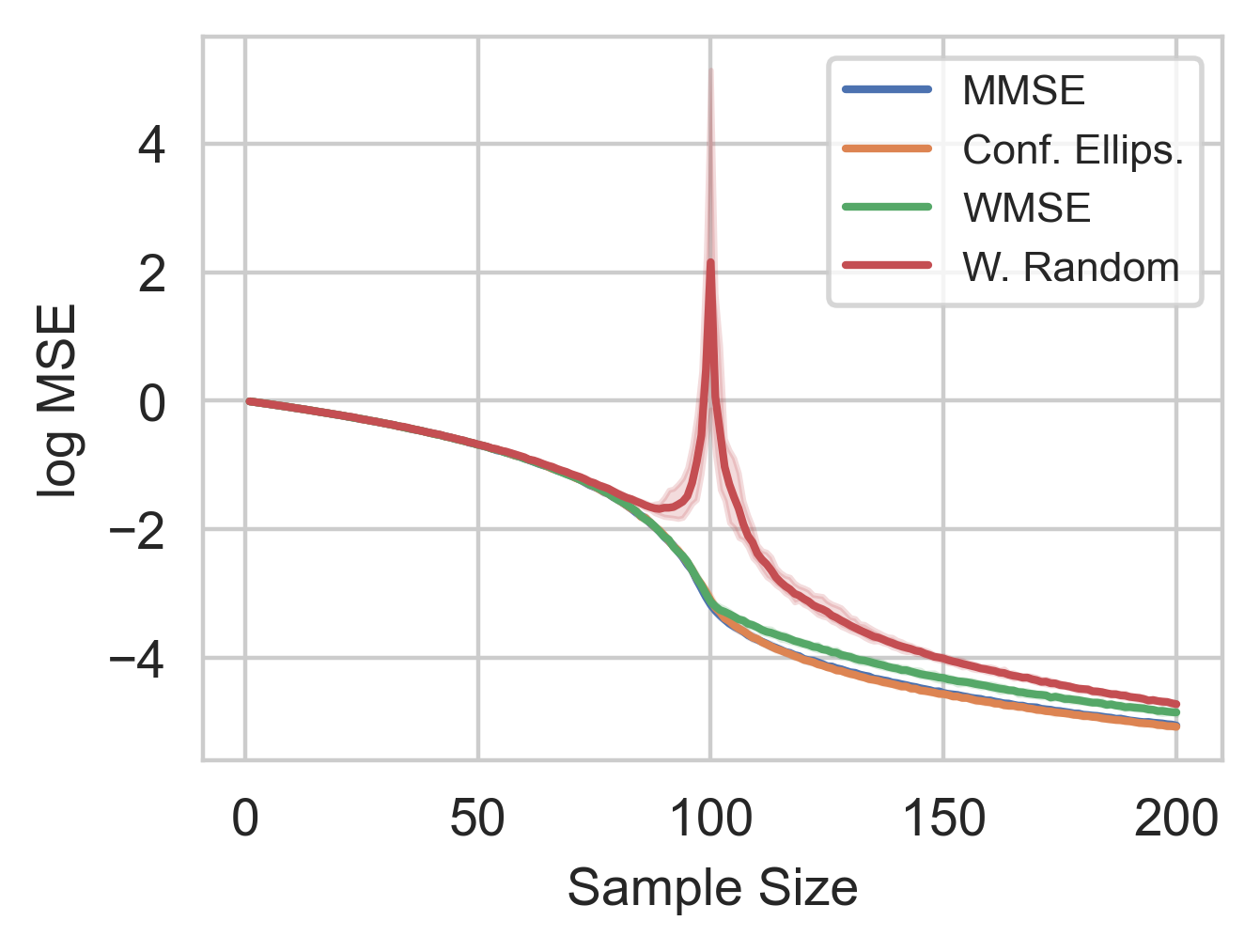}%
    \caption{SNR = $10^{2}$}%
    \label{SBM_MSE_subfigb}%
    \end{subfigure}\hfill%
    \begin{subfigure}{0.6\columnwidth}
    \includegraphics[width=\columnwidth]{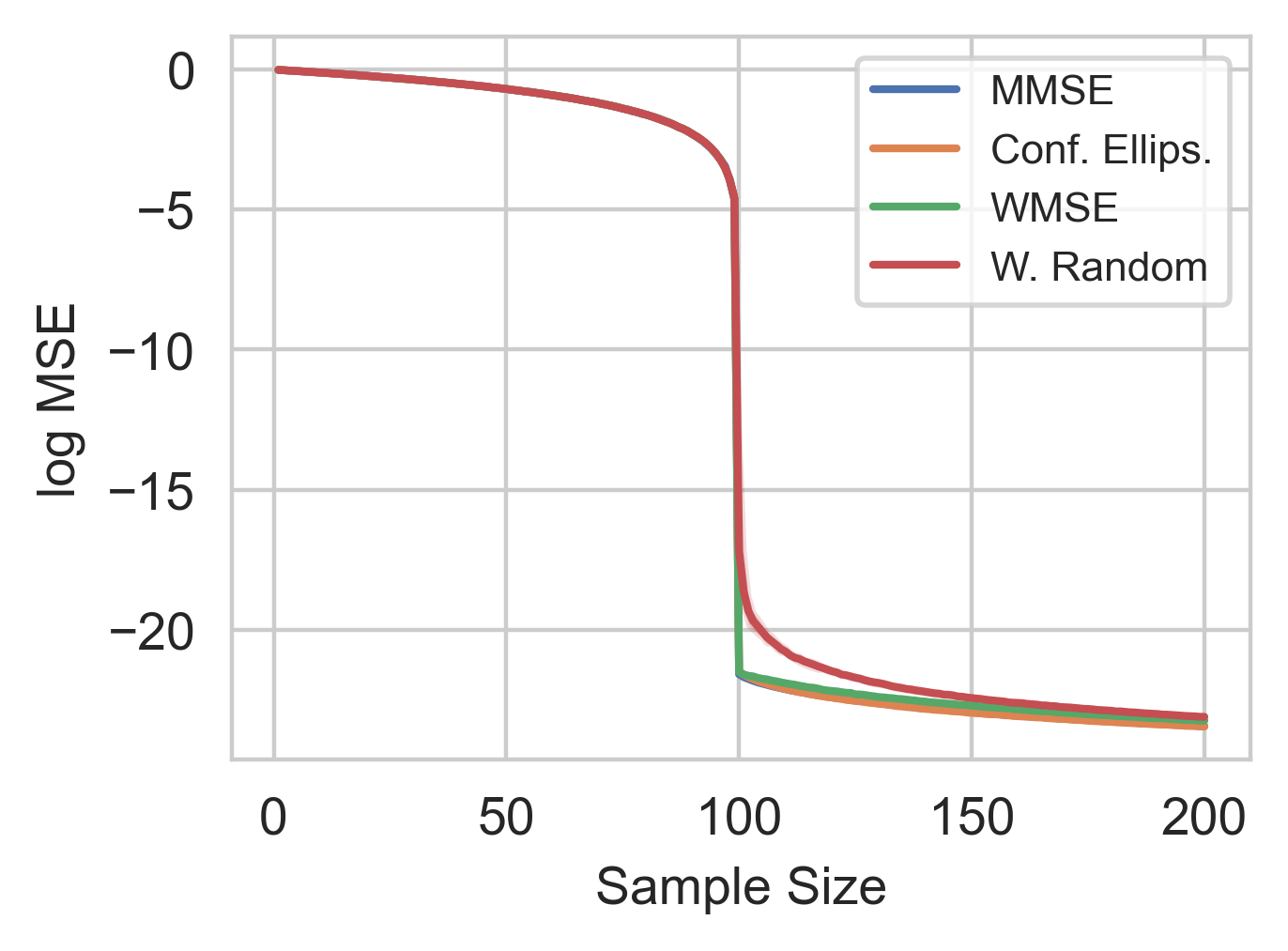}%
    \caption{SNR = $10^{10}$}%
    \label{SBM_MSE_subfigc}%
    \end{subfigure}%
    \caption{Average MSE for LS reconstruction on SBM Graphs (\#vertices=1000, bandwidth = 100) with different SNRs}
\label{LS_SBM_MSE_fig}
\end{figure*}


\end{document}